\documentclass[journal]{IEEEtran}
\usepackage{amsmath,amssymb,amsfonts}
\usepackage{amsthm}
\usepackage{calligra}
\usepackage{algorithmicx}
\usepackage[linesnumbered,ruled,vlined]{algorithm2e}
\usepackage{bm}
\usepackage{cite}
\usepackage{hyperref}
\usepackage{enumerate}
\usepackage[table]{xcolor}
\usepackage{graphicx}
\usepackage{epstopdf}
\usepackage{cite,url,epsfig}
\usepackage{epstopdf}
\usepackage{caption}
\usepackage{comment}
\usepackage{booktabs}
\usepackage{subfigure}
\newtheorem{definition}{Definition}

\hyphenation{op-tical net-works semi-conduc-tor IEEE-Xplore}

\begin{document}
\title{Optimizing Resource Allocation for Multi-modal Semantic Communication in Mobile AIGC Networks: A Diffusion-based Game Approach}

\author{Jian Liu, Ming Xiao, Jinbo Wen, Jiawen Kang*, Ruichen Zhang, Tao Zhang,\\ Dusit  Niyato, \IEEEmembership{Fellow,~IEEE}, Weiting Zhang, Ying Liu

\thanks{J. Liu, M. Xiao, and J. Kang are with the School of Automation, Guangdong University of Technology, Guangzhou 510006, China (e-mail: jianliu0908@163.com, xiaoming@gdut.edu.cn, kavinkang@gdut.edu.cn). J. Wen is with the College of Computer Science and Technology, Nanjing University of Aeronautics and Astronautics, China (e-mail: jinbo1608@nuaa.edu.cn). R. Zhang and D. Niyato are with the College of Computing and Data Science, Nanyang Technological University, Singapore (e-mail: ruichen.zhang@ntu.edu.sg, dniyato@ntu.edu.sg). T. Zhang is with the School of Software Engineering, Beijing Jiaotong University, Beijing 100044, China (e-mail: taozh@bjtu.edu.cn).
Weiting Zhang and Ying Liu are with School of Electronic and Information Engineering, Beijing Jiaotong University, Beijing 100044, China (e-mail: wtzhang@bjtu.edu.cn, yliu@bjtu.edu.cn).

\textit{*Corresponding author: Jiawen Kang}
}
}

\maketitle
\begin{abstract}

Mobile Artificial Intelligence-Generated Content (AIGC) networks enable massive users to obtain customized content generation services. However, users still need to download a large number of AIGC outputs from mobile AIGC service providers, which strains communication resources and increases the risk of transmission failures. Fortunately, Semantic Communication (SemCom) can improve transmission efficiency and reliability through semantic information processing. Moreover, recent advances in Generative Artificial Intelligence (GAI) further enhanced the effectiveness of SemCom through its powerful generative capabilities. However, how to strike a balance between high-quality content generation and the size of semantic information transmitted is a major challenge. In this paper, we propose a Generative Diffusion Model (GDM)-based multi-modal SemCom (GM-SemCom) framework. The framework improves the accuracy of information reconstruction by integrating GDMs and multi-modal semantic information and also adopts a controllable extraction module for efficient and controllable problems of unstable data recovery and slow decoding speed in GAI-enabled SemCom. Then, we introduce a novel metric called Age of Semantic Information (AoSI) based on the concept of Age of Information (AoI) to quantify the freshness of semantic information. To address the resource trading problem within the framework, we propose a Stackelberg game model, which integrates the AoSI with psychological factors to provide a comprehensive measure of user utility. Furthermore, we propose a GDM-based algorithm to solve the game under incomplete information. Compared with the traditional deep reinforcement learning algorithms, numerical results demonstrate that the proposed algorithm converges faster and is closer to the Stackelberg equilibrium.


\end{abstract}
\begin{IEEEkeywords}
Mobile AIGC, semantic communications, Stackelberg game, deep reinforcement learning, diffusion models.
\end{IEEEkeywords}
\section{Introduction}

Driven by automated information generation facilitated by Generative Artificial Intelligence (GAI) technology \cite{du2023beyond}, Artificial Intelligence-Generated Content (AIGC) has emerged as a central focus and spawned many amazing applications\cite{xu2024unleashing}, such as ChatGPT, Stable Diffusion, Sora, and Dream Fusion. Nevertheless, the effectiveness of AIGC hinges on the utilization of large neural networks containing billions of parameters, exemplified by models like GPT-4 boasting approximately 1.8 trillion parameters \cite{gpt4}. Moreover, mobile users encounter limitations stemming from constrained interactions and resource-intensive processes. Thus, the concept of mobile AIGC was proposed in \cite{xu2024unleashing}. By integrating AIGC with mobile edge computing, mobile AIGC networks can provide users with customized and real-time AIGC services on a large scale \cite{xu2024unleashing}. This integration allows a multitude of users to offload their AIGC tasks to Mobile AIGC Service Providers (MASPs), which are given sufficient physical resources to execute AIGC models. Specifically, MASPs perform AIGC inferences based on user-provided prompts\cite{wen2023freshness}, enabling users to access premium AIGC products while circumventing computational burdens and minimizing latency. 

Semantic Communications (SemCom) prioritizes the transmission of semantically meaningful information over the individual binary bits \cite{Zhang2024Wireless}. By doing so, this approach optimizes the utilization of communication resources and enhances transmission reliability, enabling efficient end-to-end delivery of high-quality content. Unlike traditional deep learning-based SemCom, which often requires network redesign for new tasks, the emergence of GAI technology overcomes this limitation. GAI-enabled SemCom can leverage GAI’s powerful generative capabilities to improve SemCom \cite{he2022robust, grassucci2023generative}. However, existing GAI-enabled SemCom frameworks remain limited by their reliance on single-modal semantic information, which prevents the full utilization of available data and overlooks the intrinsic dynamic generation capabilities of GAI models \cite{he2022robust, grassucci2023generative}. Although multi-modal approaches address the reconstruction accuracy challenge in SemCom \cite{du2024generative}, they often fail to consider the effective transmission of semantic information.

The integration of SemCom with mobile AIGC networks introduces a framework for efficiently provisioning content generation services over wireless networks, particularly in the face of constrained communication resources \cite{lin2023unified}. This framework accurately extracts relevant information from raw data, facilitating the generation of high-quality digital content for users. However, existing SemCom frameworks frequently overlook the computing resource limitations faced by mobile users and the time required for reconstructing source information. These oversights present significant challenges for the practical implementation of SemCom in real-world scenarios.

Mobile AIGC networks provide users with diverse content generation services. In latency-sensitive content generation services, the freshness of the content generation services is crucial. The delivery of fresh content generation services can enhance user immersion and significantly improve the overall Quality of Experience (QoE) of users \cite{kang2024tiny}. Therefore, in the mobile AIGC network supported by SemCom, accurate and efficient semantic information transmission is of great significance. To provide high-quality and low-latency content generation services to diverse users, it is crucial to have a large amount of communication, computing, and storage resources \cite{xu2024unleashing}. Efficient resource allocation can improve service quality and reduce resource consumption. Implementing a reasonable incentive mechanism is key to achieving optimal resource allocation, which can greatly improve the performance of mobile AIGC networks by optimizing resource utilization \cite{lai2023resource,du2023ai}. However, the specific design of incentive mechanisms in mobile AIGC networks supported by SemCom is still not fully explored in the literature. Balancing QoE and resource allocation in mobile AIGC networks supported by SemCom is a major challenge that requires further rigorous studies. We summarize several challenges existing in the current mobile AIGC networks supported by SemCom as follows:

\begin{description}
\item[\textbf{C1)}] {\textbf{Variability in GAI inferences:} Due to the dynamic characteristics of GAI inference, it is difficult for existing GAI-enabled SemCom to achieve accurate data recovery when relying on single-modal semantic information \cite{du2024generative}.}
\item[\textbf{C2)}] {\textbf{Trade-off between generation quality and transmission speed:} Transmitting more semantic information can generate higher quality AIGC products. However, this will result in slower transmission speeds due to the increased size of semantic information \cite{Zhang2024Wireless}.}
\item[\textbf{C3)}] {\textbf{Balance between resource consumption and user' QoE:} While increasing resource utilization can improve users' QoE, it also increases costs, potentially causing users to perceive the trade-off as unjustifiable \cite{kang2024tiny}. Therefore, effectively implementing resource allocation schemes remains a significant challenge.}
\end{description}

To address the above challenges, we propose a GDM-based multi-modal SemCom (GM-SemCom) framework in mobile AIGC networks. The framework consists of an extraction module and GDM design to achieve controllable semantic information size transmission, allowing MASPs to optimize semantic information extraction according to user needs to balance transmission speed and reconstruction accuracy. Likewise, the users can dynamically control the number of steps for generating GDM-based semantic decoders. In addition, the framework uses multi-modal technology to fully extract semantic information to achieve accurate transmission and accelerate the execution of GDM-based semantic decoders.

Moreover, to ensure high-quality AIGC services between the MASP and users within our proposed framework, we formulate a Stackelberg game model with the MASP as the leader and multiple users as followers. Here, Age of Information (AoI) is only a generic metric of information freshness \cite{zhang2023learning}. Thus, we propose a new metric called Age of Semantic Information (AoSI) based on the concept of AoI to measure the efficiency of AIGC services. Then, we incorporate AoSI and psychological factors to create the Stackelberg game model to improve QoE and achieve efficient resource allocation. Furthermore, the Stackelberg game model is a multi-stage competitive game under incomplete information conditions \cite{jiang2022reliable}. To obtain the optimal Stackelberg game solution, we propose a GDM-based Deep Reinforcement Learning (DRL) algorithm, where we use a diffusion model to generate actor policy in DRL. The main contributions of this paper are summarized as follows: 
\begin{itemize}
\item {We propose a novel GM-SemCom framework for mobile AIGC  by integrating diffusion models and multi-modal semantic information. This framework can achieve controllable extraction of semantic information and effectively solve the problems of unstable data recovery and slow decoding speed in GAI-enabled SemCom (for \textbf{C1} and \textbf{C2}). }
\item {To enhance the measure of the QoE of AIGC services, we proposed a new metric AoSI considering specifically the freshness of semantic information extraction and transfer. In addition, to improve the utility of both the MASP and users in the framework and achieve efficient resource allocation, we formulate a Stackelberg game model and incorporate AoSI and psychological factors to create comprehensive user utility (for \textbf{C3}).}
\item{We design a GDM-based DRL algorithm to find Stackelberg equilibrium under incomplete information conditions, in which a diffusion model is used to explore each round of the Stackelberg game, which enables accurate and robust convergence to equilibrium in complex and uncertain environments (for \textbf{C3}).}
\end{itemize}

The remainder of this paper is structured as follows: In Section \ref{related}, we review related literature. In Section \ref{system_model}, we introduce the system model. In Section \ref{framework}, we introduce the architecture of GM-SemCom. In Section \ref{problem}, we introduce the Stackelberg game for resource allocation in mobile AIGC networks. Section \ref{GDM} introduces the GDM-based DRL algorithm to solve the game under incomplete information conditions. Section \ref{results} evaluates the proposed schemes. Section \ref{conclusion} provides the conclusion of this paper. 

\section{Related Work}\label{related}
\subsection{Mobile AIGC Networks}
%
GAI techniques cover a range of methods such as Variational Autoencoders (VAEs), Generative Adversarial Networks (GANs), Transformers, and GDMs \cite{lai2023resource}. Users initially access cloud-based AIGC services over the core network by executing GAI models on cloud servers. However, cloud services exhibit high latency due to the remote nature of users. Hence, mobile AIGC emerges as a field leveraging GAI technologies to generate, process, and distribute content directly on mobile devices or at the edge\cite{xu2024unleashing}. Researchers have rapidly promoted the development of mobile AIGC from both model and system aspects \cite{xu2024unleashing}, concentrating on optimizing AIGC models and managing mobile AIGC networks. For instance, the authors in \cite{du2023exploring} proposed a collaborative distributed diffusion-based AIGC framework, facilitating the effective execution of AIGC tasks and optimizing edge computation resource utilization. The authors in \cite{wen2023freshness} first utilized the AoI metric to quantify data freshness for AIGC fine-tuning. In \cite{liao2024optimizing}, the authors proposed a graph attention network-based information propagation optimization framework, which utilizes blockchain technology to better manage AIGC products securely and avoid the tampering and plagiarization of AIGC products.

Although AIGC could potentially revolutionize existing production processes, users who currently access AIGC services through mobile devices lack support for resource-intensive data generation services and there are resource trading issues between users and MASPs. In this paper, we propose the GM-SemCom framework for mobile AIGC networks and formulate an incentive mechanism to address these issues.

\subsection{Incentive Mechanisms in Mobile AIGC Networks}

In the context of mobile AIGC, incentive mechanisms play a key role in attracting participants to contribute resources and data to AIGC services. In \cite{lai2023resource}, the authors proposed a resource-efficient incentive mechanism framework for mobile AIGC networks in resource-constrained environments. They use model partitioning to offload AI tasks efficiently and apply a Stackelberg game model to motivate edge devices to contribute computing resources. In the mobile AIGC vehicle networking scenario, to improve the unloading efficiency and maximize the utility of drones and vehicles, the authors in \cite{dai2021vehicle} modeled the computing data transaction process between drones and vehicles as a bargaining game. In addition, the authors in \cite{xu2022wireless} designed a double Dutch auction mechanism to determine the best pricing and allocation rules in this market, allowing VR users (i.e., buyers) and VR service providers (i.e., sellers) to quickly trade VR services. In \cite{du2023ai}, the authors designed a contract-theoretic AI-generated incentive mechanism to promote semantic information sharing among users in the mobile AIGC metaverse service. In \cite{ismail2022semantic}, the authors introduced a semantic extraction algorithm to reduce data transmission over wireless channels. Additionally, they proposed a reverse auction mechanism that enables service providers to select devices capable of enhancing virtual object quality through semantic information.

Although there are many incentive mechanisms for resource trading, current research often ignores the importance of semantic information size and data freshness in the valuation function. We incorporate semantic information size and data freshness into our proposed Stackelberg game to solve resource optimization challenges in mobile AIGC networks.

\subsection{Semantic Communications}
SemCom represents a paradigm shift in transmission, focusing on extracting relevant semantic information to reduce redundant data and associated costs \cite{Zhang2024Wireless}. Deep learning has proven highly effective in this area. In \cite{bourtsoulatze2019deep}, The authors designed a deep learning-based Joint Source Channel Coding (JSCC) SemCom system that can accurately transmit even in low Signal-to-Noise Ratio (SNR) conditions. Subsequent studies explored deep learning-based JSCC transmission schemes for various data types including text\cite{xie2021deep}, image\cite{kurka2021bandwidth}, speech \cite{han2022semantic}, and multi-modal data \cite{xie2022task}. The studies of SemCom are broadly categorized into two groups based on their communication objectives, i.e., data reconstruction and task inference. Actually, contemporary data traffic is predominantly visual, with video and image data comprising 75\% of IP traffic, transmitting this multimedia content faces challenges like latency and network congestion. 

In the real world, resources in wireless networks, including transmission power, GPU/CPU cores, and computational capabilities, typically exhibit dynamic availability. Most existing SemCom models rely on traditional deep learning \cite{bourtsoulatze2019deep, xie2021deep, kurka2021bandwidth, han2022semantic, xie2022task}. Direct integration of deep learning-based SemCom with mobile AIGC faces challenges due to fixed encoder and decoder structures, resulting in elevated semantic noise and decoding complexity. Completely redesigning these structures is incompatible with existing communication networks. Fortunately, GAI technology shows promise in complementing SemCom. For example, the authors in \cite{he2022robust} proposed a neural network-based image transmission system trained by the GAN to achieve robust transmission. In \cite{grassucci2023generative}, a diffusion model was used to synthesize semantically consistent scenes by spatially adaptive normalizing denoised semantic information. The authors in \cite{du2024generative} solved the reconstruction accuracy problem in GAI-enabled SemCom through multi-modality without considering the efficient transmission of semantic information, causing greater overhead than traditional communication. 

Existing GAI-enabled SemCom focuses on the accuracy of information reconstruction, but there is limited research on the size of semantic information data transmitted. More importantly, to provide users with uninterrupted immersive AIGC services, information freshness is very important. Therefore, our focus is on how to transmit semantic information and achieve accurate information reconstruction efficiently.

\begin{figure*}
\centering
\includegraphics[width=0.9\textwidth]{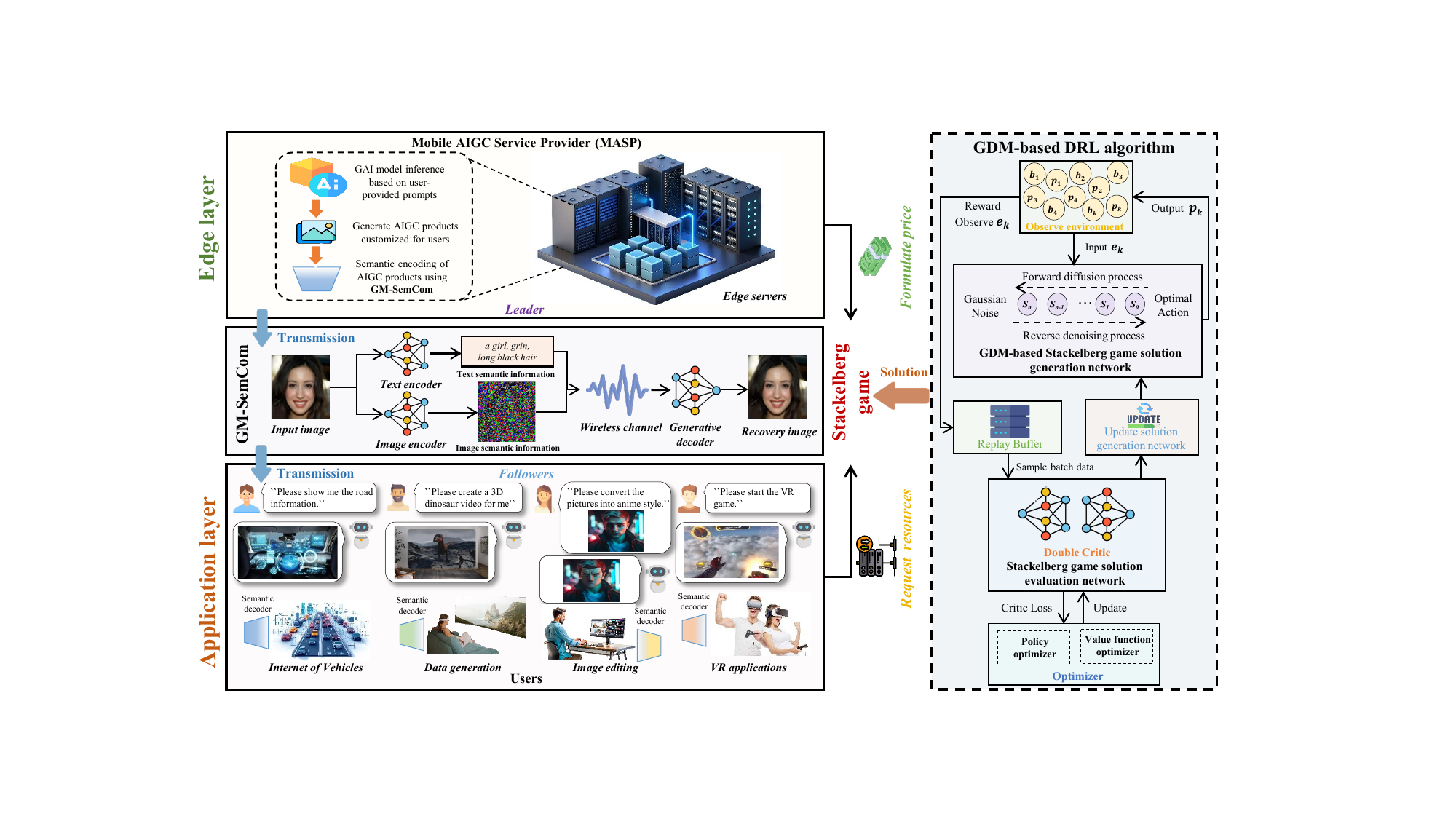}
\captionsetup{font=footnotesize}
\caption{The GM-SemCom framework for mobile AIGC networks. Specifically, we introduce several examples of vision-based mobile AIGC services supported by GM-SemCom. Then, based on the framework, we model the resource trading problem as a Stackelberg game with incomplete information. Finally, we propose a GDM-based DRL algorithm to solve the optimal solution of the Stackelberg game with incomplete information.}
\label{system}
\end{figure*}

\section{System Model}\label{system_model}
Without loss of generation, we consider a typical scenario with a set $\mathcal{M} = \{1,\ldots, i,\ldots, M\}$ of $M$ users and one MASP in mobile AIGC networks, as illustrated in Fig. \ref{system}. Following this scenario, It is not hard to extend our scenario to multiple MASPs in future work. In this paper, we focus on one MASP to simplify the study and concentrate on the core challenges within the SemCom framework. To acquire high-quality AIGC services, users first send their prompts to the MASP. Once the MASP receives prompts from the users, it utilizes various GAI models based on the requirements and provides AIGC services for the users. Then, the proposed GM-SemCom framework can be applied to overcome the communication resource constraints and efficient transmission. Specifically, the MASP performs GAI model inference after accepting user-provided prompts. For example, drivers request vision-based path planning in autonomous driving, and the drivers send prompts to the MASP. Then, the MASP performs the inferences by executing GAI models based on the driver's prompts. During the inference process, the GAI models are used to generate the path planning according to the needs of drivers. The generated path planning is then transmitted back to the drivers. However, accurate information reconstruction is challenging for GAI-enabled single-modal SemCom due to the dynamic inference characteristics of GAI. To accurately transmit the generated AIGC products and overcome the communication resource limitations, a novel multi-modal semantic encoder is used, where the AIGC products are encoded into extracted semantic information and transmitted over a wireless channel. Finally, the users use a generative semantic decoder to recover AIGC products using the received semantic information. We design the proposed framework to provide tailored services for users, aligned with their distinct requirements across varied scenarios.


\subsection{GM-SemCom Framework Design}

Figure \ref{Framework_1} shows the proposed architecture of GM-SemCom. On the encoding side, GM-SemCom uses a text encoder and an image encoder to extract semantic information from input \cite{du2024generative}. The objective of the text semantic encoder is to extract the information from the input image $\mathbf{i}_0$ into a descriptive vector $\mathbf{t}_\mathrm{sem}$, which contains the essential details of text semantic information for assisting the decoder. $\mathbf{t}_{\mathrm{sem}}$ can be expressed as $\mathbf{t}_{\mathrm{sem}} = \mathcal{E}_t(\mathbf{i}_0;\boldsymbol{\zeta}),$ where $\mathcal{E}_t(\cdot;\zeta)$ represents the text semantic encoder with a trainable parameter set $\boldsymbol{\zeta}$. The image semantic encoder encodes the image into image semantic information $\mathbf{i}_\mathrm{sem}$ to capture image structure information \cite{du2024generative}. \(\mathbf{i}_{\mathrm{sem}}\) can be expressed as $\mathbf{i}_{\mathrm{sem}} = \mathcal{E}_v(\mathbf{i}_0; \boldsymbol{\nu}),$ where $\mathcal{E}_v(\cdot;\nu)$ represents the visual semantic encoder with a trainable parameter set $\boldsymbol{\nu}$. The semantic information is transmitted through channel coding, and the transmission signal $\mathbf{T}_\mathrm{s}$ is obtained, which is given by $\mathbf{T_\mathrm{s}} = \mathcal{E}_c(\mathbf{t}_{\mathrm{sem}}, \mathbf{i}_{\mathrm{sem}}; \boldsymbol{\varphi}),$ where $\mathcal{E}_c(\cdot,\cdot;\boldsymbol{\varphi})$ represents channel encoder with a trainable parameter set $\boldsymbol{\varphi}$. The compression rate is defined as the ratio of the total bit length of the transmitted information to the bit length of the original uncompressed image, which is given by $\mathbf{R} = \frac{\mathcal{L}(\mathbf{T_\mathrm{s}})}{\mathcal{L}(\mathbf{i}_0)},$ where $\mathcal{L}$ represents the bit length of the transmitted semantic information. The received signal $\mathbf{R_s}$ can be represent as $\mathbf{R_\mathrm{s}}=\mathbf{H}\cdot\mathbf{T_\mathrm{s}}+\mathbf{N_\mathrm{p}},$ where $\mathbf{H}$ represents the physical channel gain. Without loss of generality, we consider the physical impairment caused by the Rayleigh fading channel and $\mathbf{N}_\mathrm{p}\sim\mathcal{CN}(0,\sigma_{n}^{2})$. The received multi-modal semantic information $\hat{\mathbf{t}}_{\mathrm{sem}}$ and $\hat{\mathbf{i}}_{\mathrm{sem}}$ are recovered after passing through the channel decoder and can be expressed as $\hat{\mathbf{t}}_{\mathrm{sem}},\hat{\mathbf{i}}_{\mathrm{sem}}= \mathcal{D}_c(\mathbf{R_x};\boldsymbol{\iota}),$ where $\mathcal{D}_c(\cdot;\boldsymbol{\iota})$ represents the channel decoder with a trainable parameter set $\boldsymbol{\iota}$. The decoder uses text semantic information and image semantic information to accurately reconstruct the source image expressed as $\Hat{\mathbf{x}_0} = \mathcal{D}_s(\mathbf{t}_{\mathrm{sem}}, \mathbf{i}_{\mathrm{sem}}; \boldsymbol{\beta}),$ where $\mathcal{D}_s(\cdot,\cdot;\boldsymbol{\beta})$ represents semantic decoder with a trainable parameter set $\boldsymbol{\beta}$. Finally, the users obtain AIGC output through GM-SemCom.

\begin{figure*}
\centering
\includegraphics[width=0.9\textwidth]{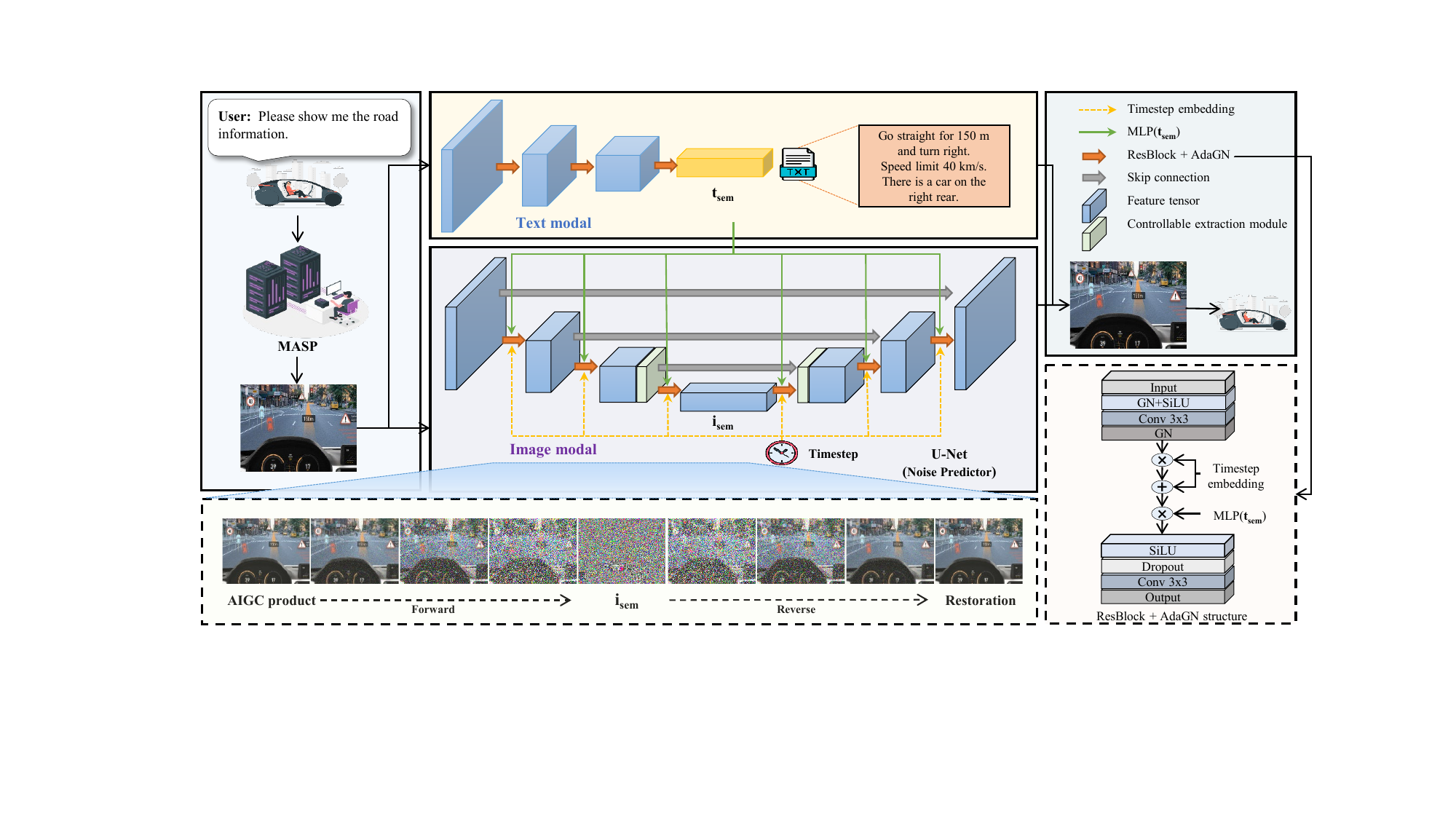}
\captionsetup{font=footnotesize}
\caption{The architecture of GDM-based multi-modal SemCom.}
\label{Framework_1}
\end{figure*}
\subsection{Stackelberg Game for Resource Allocation}

In the proposed GM-SemCom framework for mobile AIGC networks, the MASP is the resource holder, and users utilize the MASP's bandwidth resources to achieve GM-SemCom and obtain AIGC services. Therefore, a monopoly market for resource trading emerges, in which the MASP, acting as a monopolist, exercises pricing power over bandwidth. The users then must decide the amount of bandwidth to purchase based on the price to achieve AIGC services \cite{zhang2023learning,kang2024tiny}. Specifically, when the price of bandwidth decreases, users may choose to purchase additional bandwidth to enhance transmission speed and improve their QoE. On the contrary, when the price is high, users are unwilling to buy more bandwidth.

To maximize the profit of the MASP, we propose a Stackelberg game model to achieve strategic price setting.  Specifically, the Stackelberg game between the MASP and users consists of two stages. In the first stage, the MASP serves as the leader and sets the price for selling bandwidth to optimize its utility. In the second stage, each user acts as a follower and determines their bandwidth demands to optimize the QoE. Furthermore, under incomplete information conditions, it is difficult for traditional DRL algorithms to solve the Stackelberg game, i.e., slowly converge to equilibrium or easily fall into the local optimum\cite{du2023ai}. To address the above problems, we propose a GDM-based DRL algorithm to generate the optimal Stackelberg game solution.

\section{GDM-based Multi-modal SemCom with Controllable extraction}\label{framework}
In this section, we introduce the framework of GM-SemCom in mobile AIGC networks, whose core is a conditional semantic diffusion model and a controllable compression module. 

\subsection{Conditional Semantic Diffusion Models}
The core of the conditional semantic diffusion model is the Denoising Diffusion Implicit Model (DDIM) \cite{song2020denoising}, which is characterized by a deterministic generation process
\begin{equation}\mathbf{i}_{t-1}=\sqrt{\alpha_{t-1}}\left(\frac{\mathbf{i}_{t}-\sqrt{1-\alpha_{t}}\epsilon_{\theta}^{t}(\mathbf{i}_{t})}{\sqrt{\alpha_{t}}}\right)+\sqrt{1-\alpha_{t-1}}\epsilon_{\theta}^{t}(\mathbf{i}_{t}),
\label{function1}
\end{equation}
and the following inference distribution
\begin{equation}\label{inference_distribution}
q(\mathbf{i}_{t-1}|\mathbf{i}_t,\mathbf{i}_0)=\mathcal{N}\bigg(\sqrt{\alpha_{t-1}}\mathbf{i}_0+\sqrt{1-\alpha_{t-1}}\frac{\mathbf{i}_t-\sqrt{\alpha_t}\mathbf{i}_0}{\sqrt{1-\alpha_t}},\mathbf{0}\bigg).
\end{equation} 

To accurately reconstruct the source information, we design a conditional semantic decoder based on the conditional DDIM \cite{preechakul2022diffusion}. The decoder uses the text semantic information ${\mathbf{t}}_\mathrm{sem}$ as a condition to guide $\mathbf{i}_\mathrm{sem}$ to denoise and generate the source image \cite{du2024generative}. This design enables accurate reconstruction of the source information while accelerating decoding. Combining the text semantic encoder and the inverse of (\ref{function1}), $\mathbf{i}_{t+1}$ is
\begin{equation}\label{x_{t+1}}
\mathbf{i}_{t+1}=\sqrt{\alpha_{t+1}}\mathbf{f}_{\theta}(\mathbf{i}_{t},t,\mathbf{t}_{\mathrm{sem}})+\sqrt{1-\alpha_{t+1}}\epsilon_{\theta}(\mathbf{i}_{t},t,\mathbf{t}_{\mathrm{sem}}).
\end{equation}

The conditional semantic decoder takes two modal semantic information (i.e., ${\mathbf{t}}_\mathrm{sem}$ and ${\mathbf{i}}_\mathrm{sem}$) as input \cite{du2024generative}. Conditional semantic decoder models $p_{\theta} (\mathbf{i}_{t-1}|\mathbf{i}_t, {\mathbf{t}}_\mathrm{sem})$ to match the inference distribution \( q(\mathbf{i}_{t-1}|\mathbf{i}_t, \mathbf{i}_0) \) defined in (\ref{inference_distribution}), with the following reverse generative process:
\begin{equation}p_\theta(\mathbf{i}_{0:T}\mid\mathbf{t}_{\mathrm{sem}})=p(\mathbf{i}_T)\prod_{t=1}^Tp_\theta(\mathbf{i}_{t-1}\mid\mathbf{i}_t,{\mathbf{t}}_\mathrm{sem}),
\label{generative process}
\end{equation}
which can be further expressed as 
\begin{equation}\label{reverse generative pro}
p_\theta(\mathbf{i}_{t-1}|\mathbf{i}_t,{\mathbf{t}}_\mathrm{sem})=\begin{cases}\mathcal{N}(\mathbf{f}_\theta(\mathbf{i}_1,1,{\mathbf{t}}_\mathrm{sem}),\mathbf{0}),&\text{if }  t=1,\\q(\mathbf{i}_{t-1}|\mathbf{i}_t,\mathbf{f}_\theta(\mathbf{i}_t,t,{\mathbf{t}}_\mathrm{sem})),&\text{otherwise}.
\end{cases}\end{equation}

We parameterize $\mathbf{f}_{\theta}$ in \eqref{reverse generative pro} as a noise prediction network $\epsilon_{\theta}(\mathbf{i}_{t},t,\mathbf{t}_{\mathrm{sem}})$ \cite{song2020denoising}, given by
\begin{equation}
\mathbf{f}_{\theta}(\mathbf{i}_{t},t,\mathbf{t}_{\mathrm{sem}})=\frac{1}{\sqrt{\alpha_{t}}}\left(\mathbf{i}_{t}-\sqrt{1-\alpha_{t}}\epsilon_{\theta}(\mathbf{i}_{t},t,\mathbf{t}_{\mathrm{sem}})\right).
\label{function4}
\end{equation}

We use the Adaptive Group Normalization layer (AdaGN) to constrain U-Net \cite{dhariwal2021diffusion}, extending group normalization by applying channel-wise scaling and shifting on the normalized feature map $\mathbf{h}\in\mathbb{R}^{c\times h\times w}$. The AdaGN condition is
\begin{equation}\mathrm{AdaGN}(\mathbf{h},t,\mathbf{t}_{\mathrm{sem}})=\mathbf{z}_s(\mathbf{t}_s\text{GroupNorm}(\mathbf{h})+\mathbf{t}_b),\end{equation}
where $\mathbf{z}_s \in \mathbb{R}^c = \mathrm{Affine}(\mathbf{t}_{\mathrm{sem}})$ and $(\mathbf{t}_s, \mathbf {t}_b) \in \mathbb{R}^{2 \times c} = \mathrm{MLP}(\psi(t))$ represents the output of the multi-layer perception using the sinusoidal encoding function $\psi$ as Timestep embedding \cite{song2020denoising}. These layers are used consistently throughout the U-Net architecture.

Training is done by optimizing the loss function $L_{\mathrm{simple}}$ \cite{du2024generative} with respect to $\theta$, given by
\begin{equation}
L_{\mathrm{simple}}=\sum_{t=1}^T\mathbb{E}_{\mathbf{i}_0,\epsilon_t}\Big[\left\|\epsilon_\theta(\mathbf{i}_t,t,\mathbf{t}_{\mathrm{sem}})-\epsilon_t\right\|_2^2\Big],
\label{function7}
\end{equation}
where $\epsilon_{t}\in\mathbb{R}^{3\times h\times w}\sim\mathcal{N}(\mathbf{0},\mathbf{I})$, $\mathbf{i}_{t}=\sqrt{\alpha_{t}}\mathbf{i}_{0}+\sqrt{1-\alpha_{t}}\epsilon_{t}$.


\subsection{Controllable Extraction Module}
To achieve efficient transmission of semantic information, we introduce a semantic extraction layer at the end of the image semantic encoder as a controllable extraction module, which is implemented based on the image compression method \cite{fadnavis2014image}. This module uses the bicubic convolution algorithm to determine the grayscale value based on the weighted average of the nearest 16 pixels of the given input and assigns this value to the compressed image semantic information $\mathbf{i}_\mathrm{sem}$ as output. The bicubic convolution interpolation kernel is \cite{keys1981cubic} 
\begin{equation}
u\left(s\right)=\begin{cases}\frac{3}{2}\left|s\right|^{3}-\frac{5}{2}\left|s\right|^{2}+1&0<\left|s\right|<1,\\
-\frac{1}{2}\left|s\right|^{3}+\frac{5}{2}\left|s\right|^{2}-4\left|s\right|+2&1<\left|s\right|<2,
\\0&2<\left|s\right|,
\end{cases}
\end{equation}
where $s$ is the distance between the point and nearby semantic information pixels.

The use of this module can not only regulate the image semantic information to achieve efficient transmission of information but also ensure that the compressed image can still retain key image semantic information during the reconstruction process. In addition, the controllable compression module enables the MASP to provide specific services based on the user's pursuit of transmission speed or image quality. For example, if a user needs a task-oriented SemCom with low image quality and low latency, a higher compression rate can be controlled to increase the transmission speed.

The controllable extraction module's design can adjust the data size to be transmitted, thereby increasing the transmission speed and reducing the bandwidth required. However, considering the limited bandwidth, it is necessary to balance bandwidth consumption and users' QoE. Hence, in the next section, we propose a Stackelberg game approach for efficient resource allocation in mobile AIGC networks.

\section{Stackelberg Game for Resource Allocation in Mobile AIGC Networks}\label{problem}
In this section, we first propose a new metric called AoSI based on the concept of AoI, which can evaluate the transmission speed of semantic information. Then, we design a Stackelberg game model to ensure high-quality AIGC services and prove the existence and uniqueness of the equilibrium between the MASP and users  \cite{zhang2023learning}.

\subsection{Age of Semantic Information}
AoI is a metric measuring the freshness of information, which is defined as the time since the last data update event. Applying the AoI is expected to improve the performance of time-based services \cite{liew2023mechanism}. To quantify the freshness of SemCom tasks in mobile AIGC services, we propose the AoSI based on the concept of AoI, which is defined as the time from the last time the user successfully received the semantic information to the current update.

We apply Orthogonal Frequency Division Multiple Access (OFDMA) technology to wireless communication between MASP and users \cite{wen2023freshness}. Specifically, each user is assigned a downlink orthogonal Resource Block (RB). According to \cite{wang2022performance}, the downlink channel capacity corresponding to $i^{th}$ is expressed as
\begin{equation} \label{OFDMA}
C_i = \log_{2}\left(1+\frac{{\gamma d_i^{-\epsilon}} P_i}{N_{0}}\right), i \in \{1, 2, \ldots, M\},
\end{equation}
where $P_i$ represents the transmission power of the MASP, $N_0$ denotes the average noise power, $\gamma$ is the unit channel power gain, $d_i$ is the physical distance, and $\epsilon$ is the path-loss coefficient. The transmission rate of the MASP to user $i$ is given by \cite{zhang2023learning}
\begin{equation}
r_i=b_iC_i,
\end{equation}
where $b_i$ is denoted as the bandwidth demand requested by user $i$ to the MASP. To simulate the trade-off between update intervals and freshness of semantic information, we use per-packet AoI \cite{liew2023mechanism} to reflect the performance of AoSI. Thus, for user $i$, the AoSI of SemCom task is
\begin{equation}\label{AoSI_formulate}
A_i=\frac{V_i}{r_i},
\end{equation} 
where $V_i$ is the number of bits of semantic information to be transmitted, calculated by $V_i=R_i{\mathcal{L}_i(\mathbf{x}_0)}$. $R_i$ represents the compression rate of SemCom between the MASP and user $i$ and
${\mathcal{L}_i(\mathbf{x}_0)}$ represents the bit length of the source information.

\subsection{Stackelberg Game between the MASP and Users}\label{Stackelberg Game between MAPS and users}
We propose a one-leader and multiple-followers Stackelberg game, where the MASP takes the role of leader and users act as followers. The users can access the MASP to request various AIGC services. Both the MASP and users optimize their utility by adjusting their strategies. The Stackberg game can be decomposed into the following two sub-games.

\textbf{1) Bandwidth demands in Stage II:} The user's utility is the difference between the benefits of obtaining AIGC services and the cost of purchasing bandwidth. A higher AoSI causes high latency, resulting in a decrease in the user's QoE. In addition to the latency for AIGC services, we also consider the service quality of the MASP transmitted through GM-SemCom. We use the Structural Similarity Index Measurement (SSIM) \cite{wang2004image} to measure the quality of the transmitted image. SSIM is defined as
\begin{equation}\label{SSIM}
SSIM(x,y)=\frac{(2\mu_{x}\mu_{y}+C_{1})(2\sigma_{xy}+C_{2})}{(\mu_{x}^{2}+\mu_{y}^{2}+C_{1})(\sigma_{x}^{2}+\sigma_{y}^{2}+C_{2})},
\end{equation}
where $\mu_{x}$, $\sigma_{x}$, and $\sigma_{xy}$ are the mean, standard deviation, and cross-correlation between the two images $x$ and $y$, respectively. The terms $C_{1}$ and $C_{2}$ can avoid instability when the means and variances are close to zero. 

The Weber-Fechner law describes the relationship between perceived magnitude and stimulus intensity. Adopting the Weber-Fechner law in conjunction with the metaverse immersion metric \cite{du2023attention}, we use logarithms to model the QoE of the users. Therefore, combined with AoSI (\ref{AoSI_formulate}), the utility function of user $i$ can be expressed as
\begin{equation}
U_i(b_i) =  \delta_{i} \ln \left(1 + 1/A_i\right) \ln(1+SSIM_{i})  - pb_i ,
\end{equation}
where $\delta_{i}$ is the immersion coefficient of user $i$ and $p>0$ represents the monetary payment per unit of bandwidth resource of the MASP used by user $i$. During Stage II, each user seeks to maximize its utility $U_i(b_i)$ by selecting optimal bandwidth to purchase given the bandwidth price set by the MASP. Consequently, the optimization problem for user $i$ can be formulated as 

\begin{equation}
\begin{aligned}
\textbf{Problem 1:} & \quad \max_{b_i} \quad U_n(b_i) \\
& \quad ~\text{s.t.} \quad b_i > 0.
\end{aligned}
\end{equation}

\textbf{2) Bandwidth sales prices in Stage I:} The MASP's utility is the difference between the bandwidth fee paid by all users and the bandwidth resource cost, which is affected by the bandwidth price and user bandwidth demands \cite{zhang2023learning}. Hence, the utility of the MASP is
\begin{equation}\label{utilty_MASP}
U_s(p) = \mathcal{P} + \sum_{i=1}^{M} (p \cdot b_i - c \cdot b_i),
\end{equation}
where $ \mathcal{P}>0$ is the AIGC service fee, and $c > 0$ is the unit transmission cost of bandwidth for transmitting AIGC products. In the first stage, the MASP aims to maximize its benefit by setting a pricing strategy under the constraints of maximum allowable bandwidth $B_{\rm{max}}$ and a unit price cap $p_{\rm{max}}$. The MASP must ensure that the total bandwidth sold does not exceed $B_{\rm{max}}$ and that the unit price remains below $p_{\rm{max}}$. Consequently, the optimization of the MASP's utility can be formulated as
\begin{equation}
\begin{aligned}
\textbf{Problem 2:}\quad \max_{p} \quad U_s(p) \quad\quad\quad\quad\quad\quad\quad\\
\quad\quad\quad \text{s.t}.
\begin{cases}
0 < \sum_{i=1}^{M} b_i \leq B_{\rm{max}}, \\
b_i > 0, \quad \forall i \in \{1, \ldots, M\}, \\
0 < c \leq p \leq p_{\rm{max}}.
\end{cases}
\end{aligned}
\end{equation}

\textbf{3) Stackelberg equilibrium analysis:} Combining Stage II with Stage I, we find the Stackelberg equilibrium to obtain the optimal solution of the formula strategy. In the Stackelberg equilibrium, the leader (i.e., the MASP) will consider the bandwidth demands of users based on the best response to formulate strategies, thus maximizing the MASP utility, and the followers (i.e., users) cannot increase their utility by adjusting their bandwidth requirements. Then the concept of Nash equilibrium is introduced \cite{jiang2022reliable} for the bandwidth demands of the users and the Stackelberg equilibrium for the bandwidth price of the MASP is defined as follows.\\

\begin{definition}[Stackelberg Equilibrium]
We define \( b^*=\{b_i^*\}_{i=1}^M \) as the optimal bandwidth demand strategy vector and \( p^* \) as the optimal unit bandwidth sales price. A Stackelberg equilibrium is achieved if and only if the following two conditions are satisfied:
\[
\begin{cases}
U_{i}\left(b_{i}^{*},b_{-i}^{*},p^{*}\right) \geq U_{i}\left(b_{i},b_{-i}^{*},p^{*}\right),\:\forall i\in\mathcal{M}, \\
U_{s}\left(b^{*},p^{*}\right) \geq U_{s}\left(b^{*},p\right).
\end{cases}
\]
\end{definition}

Below we use backward induction to demonstrate the Stackelberg equilibrium \cite{kang2024tiny}. For clarity without loss of generality, we denote the following $\ln(1+SSIM_{i})$ as $\mathcal{S}_i$.

\newtheorem{theorem}{Theorem}
\begin{theorem}
We first study the existence of the Nash equilibrium of the users' sub-games in Stage II.
\end{theorem}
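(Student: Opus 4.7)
The plan is to establish existence of a Nash equilibrium for the Stage II followers' sub-game by verifying the classical sufficient conditions of the Debreu--Glicksberg--Fan existence theorem: each user's strategy space is non-empty, convex and compact, and each user's utility function is continuous in the joint strategy profile and (quasi-)concave in its own strategy.

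First I would substitute $A_i = V_i / r_i = V_i/(b_i C_i)$ into the utility to obtain the clean form
\begin{equation*}
U_i(b_i) = \delta_i \mathcal{S}_i \ln\!\left(1+\tfrac{C_i}{V_i}b_i\right) - p\, b_i ,
\end{equation*}
and observe the important structural fact that $U_i$ depends on $b_i$ alone (given the leader's price $p$) and not on the other followers' demands $b_{-i}$. This already makes the sub-game a collection of $M$ decoupled single-variable concave programs, which simplifies the equilibrium argument considerably but I would still frame it in game-theoretic language since it is what the subsequent Stackelberg analysis needs.

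Next I would show strict concavity of $U_i$ in $b_i$. Let $k_i := C_i/V_i > 0$; then a direct computation gives
\begin{equation*}
\frac{\partial U_i}{\partial b_i} = \frac{\delta_i \mathcal{S}_i\, k_i}{1+k_i b_i} - p, \qquad
\frac{\partial^2 U_i}{\partial b_i^2} = -\frac{\delta_i \mathcal{S}_i\, k_i^2}{(1+k_i b_i)^2} < 0 ,
\end{equation*}
since $\delta_i > 0$ and $\mathcal{S}_i = \ln(1+SSIM_i) > 0$ whenever $SSIM_i>0$. Hence $U_i$ is strictly concave (and continuous) on $\{b_i > 0\}$.

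For compactness, I would argue that although the stated constraint is $b_i > 0$, the asymptotic behavior $U_i(b_i) \to -\infty$ as $b_i \to \infty$ (because the logarithmic benefit term is dominated by the linear cost $-p\,b_i$) lets us restrict attention without loss of generality to a compact interval $[\varepsilon, \bar b_i]$ for some $\bar b_i < \infty$; alternatively, the leader-side bound $\sum_i b_i \le B_{\max}$ supplies a natural upper bound. With the strategy set now non-empty, convex and compact and the utility continuous and strictly concave in the own variable, the standard existence theorem yields a Nash equilibrium; strict concavity further implies that the best response $b_i^*(p)$ is a singleton for each $p$, which will be the uniqueness hook reused in the next theorem.

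The step I expect to require the most care is the compactification argument and handling the open boundary $b_i = 0$ cleanly, since existence theorems strictly require a compact domain; I would address this by combining the coercivity $U_i(b_i) \to -\infty$ as $b_i \to \infty$ with the observation that $U_i(0)=0$ while the first-order condition $\delta_i \mathcal{S}_i k_i/(1+k_i b_i) = p$ is solvable with a strictly positive $b_i^*$ whenever $p < \delta_i \mathcal{S}_i k_i$, so the optimizer lies in the interior of a suitable compact subinterval. The concavity inequality itself is mechanical and not the obstacle.
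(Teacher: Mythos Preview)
Your proposal is correct and follows essentially the same route as the paper: substitute $A_i$ into $U_i$, compute the first and second derivatives, verify strict concavity from the negative second derivative, and obtain the unique best response $b_i^*=\delta_i\mathcal{S}_i/p - V_i/C_i$ from the first-order condition under the condition $p<\delta_i\mathcal{S}_i C_i/V_i$. The paper's proof is terser---it simply displays the two derivatives and the closed-form optimizer---whereas you additionally frame the argument via the Debreu--Glicksberg--Fan theorem, make the decoupling from $b_{-i}$ explicit, and address compactness; these are refinements the paper omits, not a different approach.
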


\begin{proof}
The first-order and second-order derivatives of the utility function $U_i(b_i)$ with respect to the bandwidth demand strategy $b_i$ are
\begin{equation}
\frac{\partial U_i(b_i)}{\partial b_i} =    \frac{\delta_i S_i C_i}{V_i + b_i C_i} - p ,
\end{equation}

\begin{equation}
\frac{\partial^2 U_i(b_i)}{\partial b_i^2} = - \frac{\delta_i S_i C_i^2}{(V_i + b_i C_i)^2} < 0.
\end{equation}

The utility function $U_i(b_i)$ has a first-order derivative equal to zero at a critical point, and its second-order derivative is less than zero, indicating that the optimization problem is concave \cite{kang2024tiny}. To determine the optimal response strategy $b^*$, we set the first-order derivative of $U_i(b_i)$ to zero, giving the optimal response strategy $b^*$ is

\begin{equation} \label{b^*}
b^* =
\begin{cases} 
    \frac{\delta_i \mathcal{S}_i} {p} - \frac{V_i}{C_i}, & \text{if } \frac{\delta_i \mathcal{S}_i C_i}{V_i} > p, \\
    0, & \text{otherwise}.
\end{cases}
\end{equation}

The value of $b^*$ must be greater than zero (i.e., $\frac{\delta_i \mathcal{S}_i C_i}{V_i} > p$), and otherwise the user will not be able to obtain services.
\end{proof}

\newtheorem{theorem1}{Theorem}
\begin{theorem}
With the optimal bandwidth demand response strategy $b^*$, there exists a unique Stackelberg equilibrium $p^*$.
\end{theorem}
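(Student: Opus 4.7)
The plan is to use backward induction: substitute the users' best response $b^{*}$ from Theorem~1 into the leader's utility $U_s(p)$, show the resulting one-variable function is strictly concave on the feasible price interval, and then read off the unique maximizer from the first-order condition.

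First I would restrict attention to the ``active'' regime $\frac{\delta_i \mathcal{S}_i C_i}{V_i} > p$ so that each follower's best response is $b_i^{*}(p) = \frac{\delta_i \mathcal{S}_i}{p} - \frac{V_i}{C_i}$, and substitute into \eqref{utilty_MASP} to obtain
\begin{equation}
U_s(p) \;=\; \mathcal{P} + \sum_{i=1}^{M} (p-c)\left(\frac{\delta_i \mathcal{S}_i}{p} - \frac{V_i}{C_i}\right),
\end{equation}
which, after expanding, becomes a sum of a constant, a linear-in-$p$ term, and a $1/p$ term. This is exactly the kind of function whose concavity is easy to verify.

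Next I would compute the derivatives with respect to $p$:
\begin{equation}
\frac{\partial U_s(p)}{\partial p} \;=\; \sum_{i=1}^{M}\left(\frac{c\,\delta_i \mathcal{S}_i}{p^{2}} - \frac{V_i}{C_i}\right), \qquad \frac{\partial^{2} U_s(p)}{\partial p^{2}} \;=\; -\sum_{i=1}^{M}\frac{2c\,\delta_i \mathcal{S}_i}{p^{3}}.
\end{equation}
Since $c,\delta_i,\mathcal{S}_i>0$ and $p>0$, the second derivative is strictly negative, so $U_s$ is strictly concave on $(0,p_{\rm max}]$. Strict concavity immediately gives uniqueness of the maximizer. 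Setting the first-order condition to zero yields the closed-form candidate
\begin{equation}
p^{*} \;=\; \sqrt{\dfrac{c\,\sum_{i=1}^{M}\delta_i \mathcal{S}_i}{\sum_{i=1}^{M} V_i/C_i}},
\end{equation}
and I would then argue that $p^{*}\geq c$ follows directly from this expression (by Cauchy--Schwarz / AM-GM-type bounds on the numerator vs.\ denominator), and that the relevant projection onto $[c,p_{\rm max}]$ preserves uniqueness because strictly concave functions attain their constrained maximum at a unique point in a closed interval.

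The main obstacle, and where I would be most careful, is handling the feasibility constraints and the piecewise character of the followers' response. Specifically, I need to verify: (i) that at $p=p^{*}$ every active user still satisfies $\frac{\delta_i \mathcal{S}_i C_i}{V_i}>p^{*}$ so that the substituted expression is valid (or, if not, restrict the sum to the active set and re-solve, noting the active set is monotone in $p$ so uniqueness is preserved); (ii) that the aggregate-bandwidth constraint $\sum_i b_i^{*}(p)\leq B_{\rm max}$ holds at $p^{*}$, otherwise the optimum is attained on the boundary where this constraint binds, in which case strict concavity of $U_s$ together with linearity of the constraint again gives a unique solution via KKT. Once these boundary cases are handled, the Stackelberg equilibrium $(b^{*},p^{*})$ is established and is unique by construction.
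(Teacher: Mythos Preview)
Your approach is essentially identical to the paper's: substitute the followers' best response into $U_s(p)$, verify strict concavity via the second derivative, and read off $p^{*}$ from the first-order condition. Two remarks are worth making. First, your closed form $p^{*}=\sqrt{c\sum_i \delta_i\mathcal{S}_i\big/\sum_i V_i/C_i}$ is what actually follows from the first-order condition that both you and the paper write down; the paper's stated expression $\sqrt{c\sum_i \delta_i\mathcal{S}_i C_i/\sum_i V_i}$ differs and appears to be a slip in placing $C_i$. Second, your explicit discussion of the piecewise best response, the active-user set, and the aggregate-bandwidth/price-cap constraints goes beyond the paper, which simply invokes concavity on a closed bounded convex set; your treatment is therefore more complete, though your remark that $p^{*}\ge c$ follows from a Cauchy--Schwarz/AM--GM bound is not quite right---it instead follows from the activity condition $\delta_i\mathcal{S}_iC_i/V_i>p$ (summed over $i$) together with $p\ge c$.
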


\begin{proof} In the Stackelberg game, the MASP, as the leader, is aware that a unique Nash equilibrium exists among users for a given bandwidth price $p$ \cite{zhang2023learning}. Hence, the MASP can maximize its utility by selecting the optimal $p$ through $b^*$ in Stage I. Substituting \eqref{b^*} into \eqref{utilty_MASP}, we have

\begin{equation}
U_s(p) = \mathcal{P} + \sum_{i=1}^{M} (p-c)\bigg(\frac{\delta_i \mathcal{S}_i}{p}-\frac{V_i}{C_i}\bigg) .
\end{equation}

By calculating the first-order and second-order derivatives of the MASP utility function, we obtain
\begin{equation}
\frac{\partial U_s}{ \partial p} =  \sum_{i=1}^{M}  -\frac{V_i}{C_i} + \frac{c \delta_i \mathcal{S}_i}{p^2},
\end{equation}

\begin{equation}
\frac{\partial^2 U_s(p)}{\partial p^2} =\sum_{i=1}^{M} \frac{-2c  \delta_i \mathcal{S}_i }{p^3} < 0.
\end{equation}

Similarly, setting the first-order derivative of $U_s(p)$ to zero, we can obtain the optimal pricing strategy $p^*$ of the MASP for users as \cite{jiang2022reliable}
\begin{equation}\label{Stackelberg equilibrium}
p^*=\sqrt{\frac{c\sum_{i=1}^{M}  \delta_i \mathcal{S}_i C_i} {\sum_{i=1}^{M} V_i}}.
\end{equation}

The second-order derivative of the utility function $U_s(p)$ of the MASP is negative, indicating its concavity. In addition, the set of pricing strategies of the MASP satisfies the basic criteria of closed, bounded, and convex. Hence, we can uniquely obtain the Stackelberg equilibrium \cite{kang2024tiny}. 
\end{proof}

\section{GDM-based Solution for Optimal Stackleberg Game with Incomplete Information}\label{GDM}
In this section, we model the proposed Stackelberg game between the MASP and users as a 
Partially Observable Markov Decision Process (POMDP) \cite{zhang2023learning}. DRL provides a promising solution to the incentive mechanism under privacy issues \cite{kang2024tiny}. In addition, since traditional DRL algorithms are unstable and difficult to find optimal solutions when solving incomplete information problems \cite{du2023beyond}, we propose a GDM-based DRL algorithm designed to achieve the Stackelberg equilibrium by exploring the solutions for the MASP and users within the Stackelberg game model.

\subsection{POMDP for the Stackelberg Game}
Given the inherent competitive dynamics of the Stackelberg game, each user possesses only partial local information and selects bandwidth strategies in a completely non-cooperative manner \cite{zhang2023learning}. DRL can derive the best strategy from historical experiences, relying only on the current state and the assigned rewards without prior knowledge \cite{Zhang2023Energy}. Therefore, we need to train the DRL agent in an environment that follows the POMDP \cite{kang2024tiny}. The environment is constructed by modeling the dynamic interaction between the MASP and users.

\textit{1) State Space:} In each round of the Stackelberg game, $k\in \{0,\ldots,k,\ldots, K\}$, the MASP agent bases its decisions solely on local observations of the environment. The state space for the MASP in the current game is defined as a combination of its historical pricing strategy from the past $L$ rounds and its current bandwidth demand strategy \cite{zhang2023learning}.
\begin{equation}
\boldsymbol{e}=\{p_{k-L},b_{k-L},p_{k-L+1},b_{k-L+1},\ldots,p_{k-1},b_{k-1}\} .
\end{equation}

In the initial environment, $p_{k-L}$ and $b_{k-L}$ can be randomly generated.

\textit{2) Action Space:} The MASP agent will select an optimal pricing strategy based on the environmental $\boldsymbol{e}$, i.e., historical pricing and bandwidth demand. When receiving an environmental observation, the MASP agent needs to decide on a pricing action $p_k$ to optimize its utility. Therefore, the action space is defined as $\boldsymbol{s}=\{p_k\}$.

\textit{3) Reward:} The MASP receives rewards based on the current environment $\boldsymbol{e}$ and the corresponding actions $\boldsymbol{s}$. We define the POMDP function as the MASP utility function \eqref{utilty_MASP} that we constructed in the Stackelberg game \cite{kang2024tiny}.

\begin{algorithm}[ht]
\DontPrintSemicolon
\SetAlgoLined
    \caption{GDM-based Stackelberg game solution}\label{algorithm}
    \KwIn{Denoising steps $N$, exploration noise $\epsilon$, discount factor $\omega$, batch size $b$, max number of episodes $E_{\rm{max}}$, max number of rounds $K_{\rm{max}}$, max number of steps $S_{\rm{max}}$.}
    \KwOut{optimal bandwidth pricing strategy ${\bm{s}}_0$.}
    
    \textbf{\textit{Phase 1: Initialization}} \\
    Initialize the weights $\theta$ for the Stackelberg game solution  generation network ${\bm{\epsilon}}_\theta$. \\
    Initialize the weights $\upsilon$ for the Stackelberg game solution evaluation network $Q_\upsilon$. \\
    
    \textbf{\textit{Phase 2: Training}} \\
    \For{\text{Episode} $E = 1$ \KwTo $E_{\rm{max}}$}{
        Reset environment state and replay buffer. \\
        \For{\text{Round}  $K = 1$ \KwTo $K_{\rm{max}}$}{
        Initiate bandwidth pricing exploration using a random process ${\mathcal{N}}$. \\
           \For{\text{Step} $S = 1$ \KwTo $S_{\rm{max}}$}{
            Observe the current environmental state $\boldsymbol{e}$. \\
            Set the initial bandwidth pricing strategy ${\bm{s}}_N$ as Gaussian noise. \\
            Generate the bandwidth pricing strategy ${\bm{s}}_0$ by denoising ${\bm{s}}_N$ using the network ${\bm{\epsilon}}_{\theta}$ according to~\eqref{denoise}. \\
            User determine bandwidth demands through \eqref{b^*}
            incorporate exploration noise into ${\bm{s}}_0$. \\
            Implement the generated bandwidth pricing strategy ${\bm{s}}_0$ in the environment and measure the resulting objective function value. \\
            Record the actual objective function value $r\left({\bm{e}},{\bm{s}_0}\right)$. \\
            Update the Stackelberg game solution evaluation network $Q_\upsilon$ according to~\eqref{Q-v}. \\
            Update the Stackelberg game solution generation network ${\bm{\epsilon}}_\theta$ according to~\eqref{epsilon}. \\
        }
        }
    }
    \textbf{return} The trained Stackelberg game solution generation network ${\epsilon}_{\theta}$. \\
    
    \textbf{\textit{Phase 3: Inference}} \\
    Observe the environmental vector ${\boldsymbol{e}}$. \\
    Generate the optimal bandwidth pricing strategy ${\bm{s}}_0$ by denoising Gaussian noise using the trained network ${\bm{\epsilon}}_\theta$ according to \eqref{denoise}. \\
    \textbf{return} The optimal bandwidth pricing strategy ${\bm{s}}_0$.
\end{algorithm}

\subsection{GDM-based Stackelberg Game Solution}
Stackelberg game is a dynamic game model and diffusion model generates the target by denoising multiple time steps \cite{jiang2022reliable,song2020denoising}. The diffusion model combined with the Stackelberg game can fine-tune multiple time steps and enhance the exploration of the Stackelberg equilibrium. Motivated by this characteristic, we propose a GDM-based DRL algorithm that aims to achieve the Stackelberg equilibrium. We map the environmental states to a diffusion model network and define the network as the Stackelberg game policy, i.e., $\pi_{\theta}\left(\boldsymbol{s}|\boldsymbol{e}\right)$, where $\theta$ represents the model parameters \cite{du2023beyond}.

The goal of $\pi_{\theta}\left(\boldsymbol{s}|\boldsymbol{e}\right)$ is to optimize the POMDP output to maximize the expected cumulative reward over a series of time steps, thereby maximizing the benefit of the MASP. We obtain the Stackelberg game policy through the inverse process of the conditional diffusion model \cite{song2020denoising} as 
\begin{equation}\pi_{\theta}(\boldsymbol{s}|\boldsymbol{e})=p_{\theta}\big(\boldsymbol{s}_{0:N}\big|\boldsymbol{e}\big)=\mathcal{N}(\boldsymbol{s}_{N};\boldsymbol{0},\mathbf{I})\prod_{n=1}^{N}p_{\theta}\big(\boldsymbol{s}_{n-1}\big|\boldsymbol{s}_{n},\boldsymbol{e}\big).
\end{equation}

The training process of GDMs involves optimizing the negative log-likelihood of the training data \cite{du2023beyond}. In the denoising process, historical user bandwidth demands and the MASP pricing condition information, i.e., $\boldsymbol{e}$, are added. $p_\theta(\boldsymbol{s}_{n-1}|\boldsymbol{s}_n,\boldsymbol{e})$ can be interpreted as building a noise prediction model. The covariance matrix is expressed as \begin{equation}\Sigma_\theta\left(\boldsymbol{s}_n,\boldsymbol{e},n\right)=\beta_n\mathbf{I}.\end{equation}

The mean is designed as \begin{equation} \mu_{\theta}\left(\boldsymbol{s}_{n},\boldsymbol{e},n\right)=\frac{1}{\sqrt{\alpha_{n}}}\left(\boldsymbol{s}_{n}-\frac{\beta_{n}}{\sqrt{1-\bar{\alpha}_{n}}}\boldsymbol{\epsilon}_{\theta}\left(\boldsymbol{s}_{n},\boldsymbol{e},n\right)\right).
\end{equation}

The MASP pricing strategy $\boldsymbol{s} \sim \mathcal{N}(\boldsymbol{0},\mathbf{I})$ can be sampled from the reverse forward diffusion process \cite{du2023beyond}, which is parameterized by $\theta$ as
\begin{equation}\label{denoise}
\boldsymbol{s}_{n-1}\mid \boldsymbol{s}_n=\frac{\boldsymbol{s}_n}{\sqrt{\alpha_n}}-\frac{\beta_n}{\sqrt{\alpha_n\left(1-\bar{\alpha}_n\right)}}\epsilon_\theta\left(\boldsymbol{s}_n,\boldsymbol{e},n\right)+\sqrt{\beta_n}\epsilon,\end{equation}
where $\epsilon \sim \mathcal{N}(0,I)$ and $n=1,\ldots,N$.

Motivated by the Q-function in DRL, we introduce a Stackelberg game solution evaluation network $Q_{v}$, which can assign a Q-value to a pair of bandwidth pricing policies (i.e., $\boldsymbol{e}$ and $\boldsymbol{s}$), which represents the expected cumulative reward when the agent takes a Stackelberg game policy from the current state and follows the policy. The $Q_{v}$ network acts as a navigation aid during the training of the GDM network, i.e., the Stackelberg game solution generation network $\epsilon_{\theta}$. The optimal $\epsilon_{\theta}$ is the Stackelberg game solution generation network with the highest expected Q-value. Therefore, the loss function of the network can be expressed as \cite{du2023beyond}
\begin{equation}\label{epsilon}
\arg\min_{\boldsymbol{\epsilon}_\theta}\mathcal{L}_{\boldsymbol{\epsilon}}(\theta)=-\mathbb{E}_{\boldsymbol{s}_0\sim\boldsymbol{\epsilon}_\theta}\left[Q_v\left(\boldsymbol{e},\boldsymbol{s}_0\right)\right].\end{equation}

We aim to train the Stackelberg game solution evaluation network $Q_{v}$ so that the difference between the predicted Q-value and the actual Q-value is minimized. Hence, the optimization of $Q_{v}$ is
\begin{equation}\label{Q-v}
\arg\min_{Q_v}\mathcal{L}_Q(v)=\mathbb{E}_{\boldsymbol{s}_0\sim\pi_\theta}\left[\left\|r(\boldsymbol{e},\boldsymbol{s}_0)-Q_v\left(\boldsymbol{e},\boldsymbol{s}_0\right)\right\|^2\right],\end{equation}
where $r$ represents the objective function value of implementing $\boldsymbol{s}_0$  under $\boldsymbol{e}$.

The overall algorithm of the GDM-based Stackelberg game solution is shown in Algorithm~\ref{algorithm}. $\psi_g$ and $\psi_e$ are the weights in the Stackelberg game solution generation network and the evaluation network, respectively. Each update of the generation network and evaluation network results in the complexity of $\mathcal{O}(\psi_g )$ and $\mathcal{O}(\psi_e)$ respectively. Replay buffer activities maintain a storage complexity of $\mathcal{O}(1)$. The computational complexity of the training phase is \( \mathcal{O}(E_{\text{max}}K_{\text{max}} S_{\text{max}} (N \psi_g + \psi_e )\). In the inference phase, the complexity of generating the MASP pricing policy using the trained network is \( \mathcal{O}(\psi_e )\), assuming that reward observation and exploration noise generation are constant time operations.

\begin{figure*}[t]
\centering
\subfigure[SSIM.]
{
    \begin{minipage}[t]{0.4\linewidth}
	\centering
	\includegraphics[width=1\linewidth]{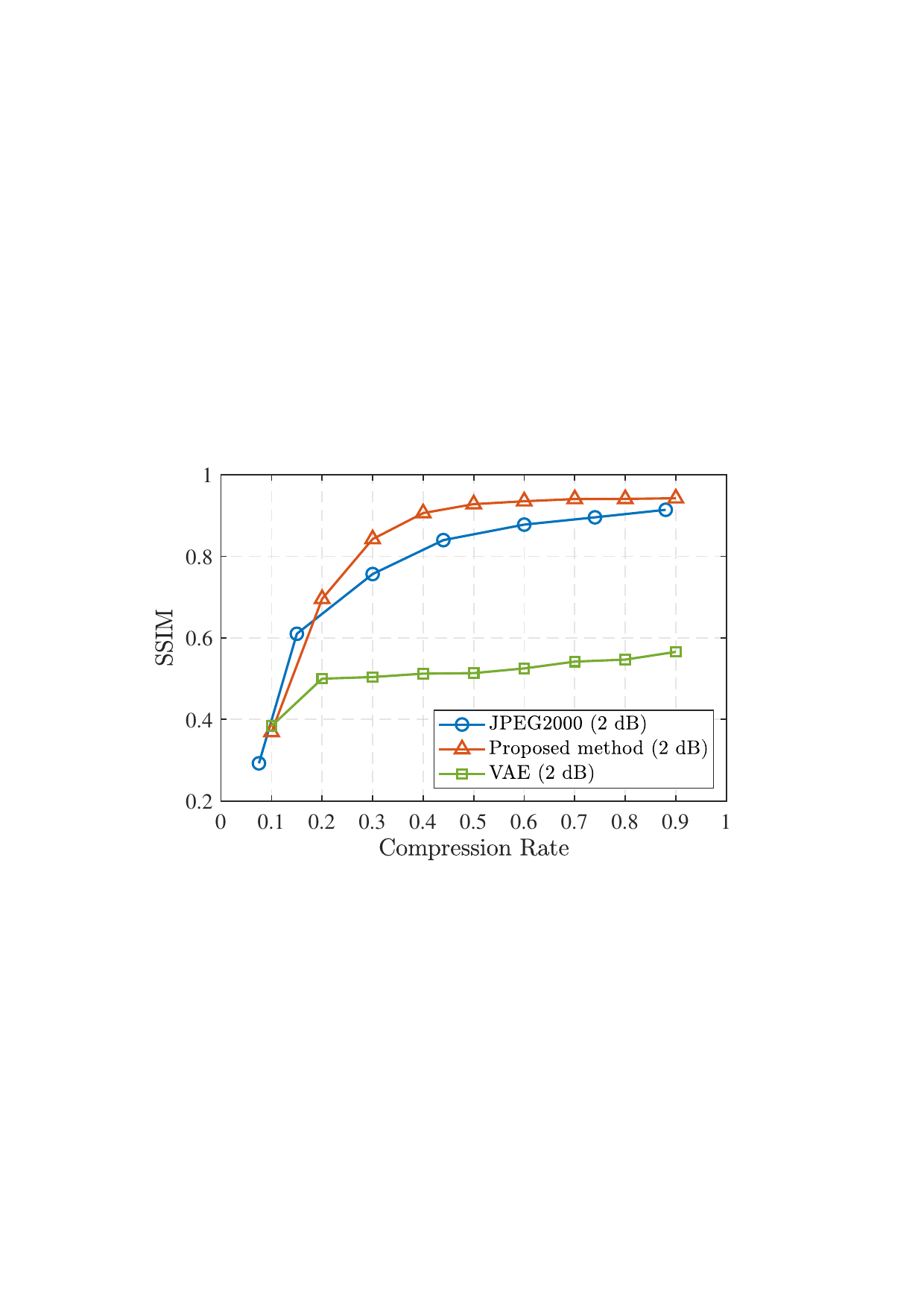}
        \captionsetup{font=footnotesize}
	\label{ssim_snr2}
    \end{minipage}
}
\hspace{0.3in}
\subfigure[PSNR.]
{
    \begin{minipage}[t]{0.4\linewidth}
	\centering
	\includegraphics[width=1\linewidth]{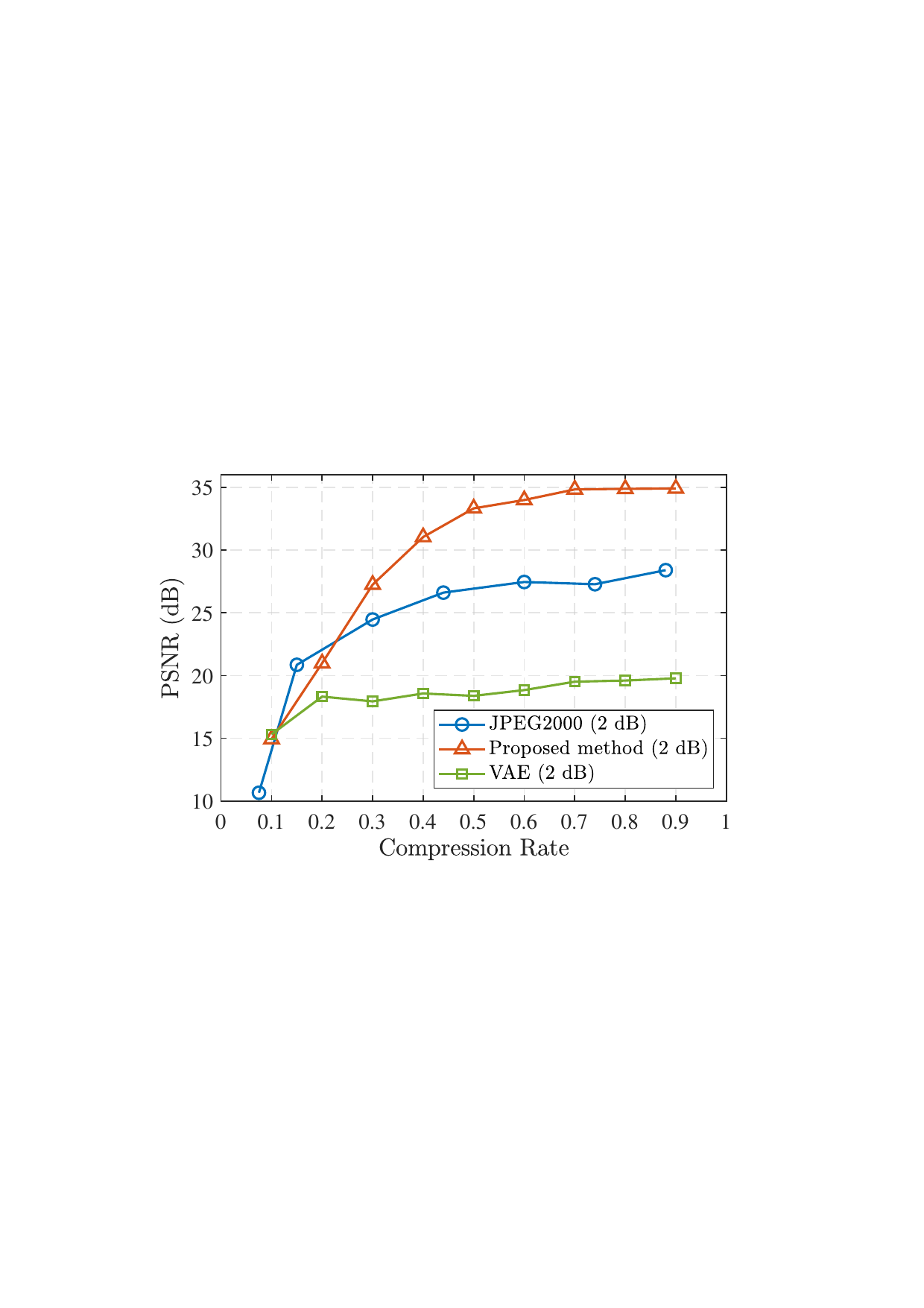}
        \captionsetup{font=footnotesize}
	\label{psnr_snr2}
    \end{minipage}
}
\captionsetup{font=footnotesize}
\caption{Performance comparison of the proposed method with other compression algorithms in a channel with SNR = 2 dB.}\label{snr2}
\end{figure*}

\begin{figure*}[t]
\centering
\subfigure[SSIM.]
{
    \begin{minipage}[t]{0.4\linewidth}
	\centering
	\includegraphics[width=1\linewidth]{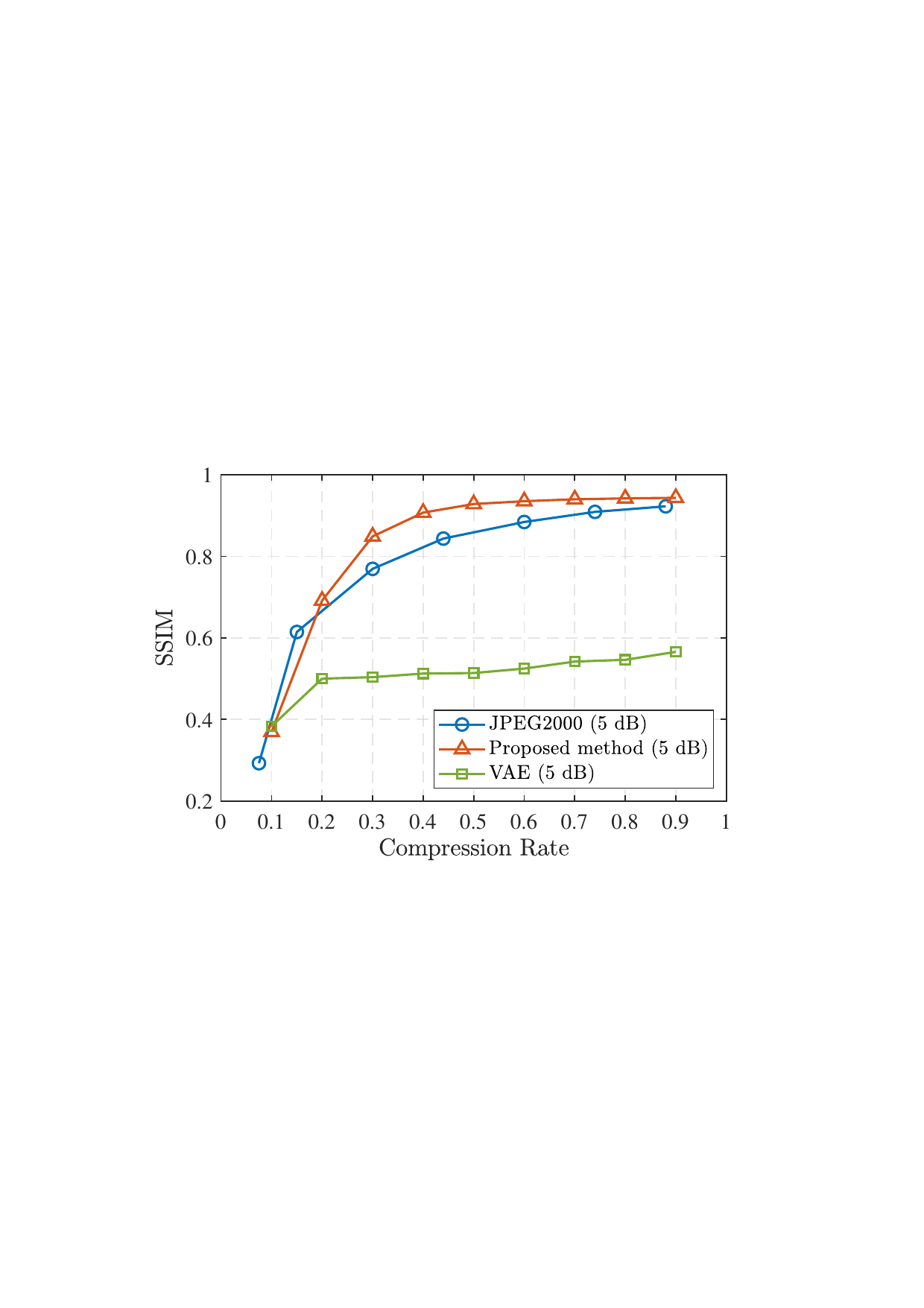}
        \captionsetup{font=footnotesize}
	\label{ssim_snr5}
    \end{minipage}
}
\hspace{0.3in}
\subfigure[PSNR.]
{
    \begin{minipage}[t]{0.4\linewidth}
	\centering
	\includegraphics[width=1\linewidth]{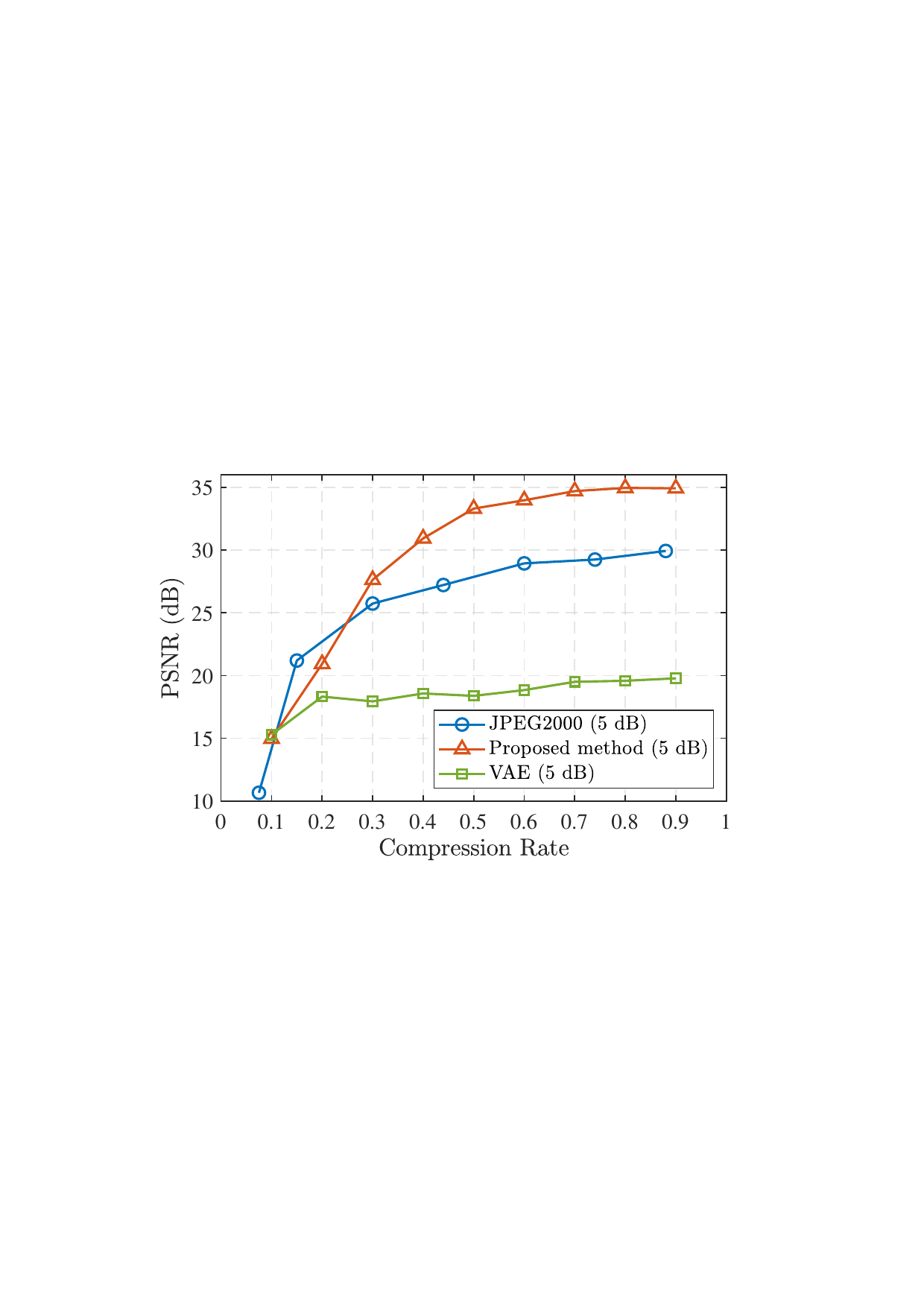}
        \captionsetup{font=footnotesize}
	\label{psnr_snr5}
    \end{minipage}
}
\captionsetup{font=footnotesize}
\caption{Performance comparison of the proposed method with other compression algorithms in a channel with SNR = 5 dB.}\label{snr5}
\end{figure*}

\begin{figure*}[t]
    \begin{center}
	\begin{minipage}[t]{0.4\linewidth}
		\centering
            \captionsetup{font=footnotesize}
            \includegraphics[width=1\textwidth]{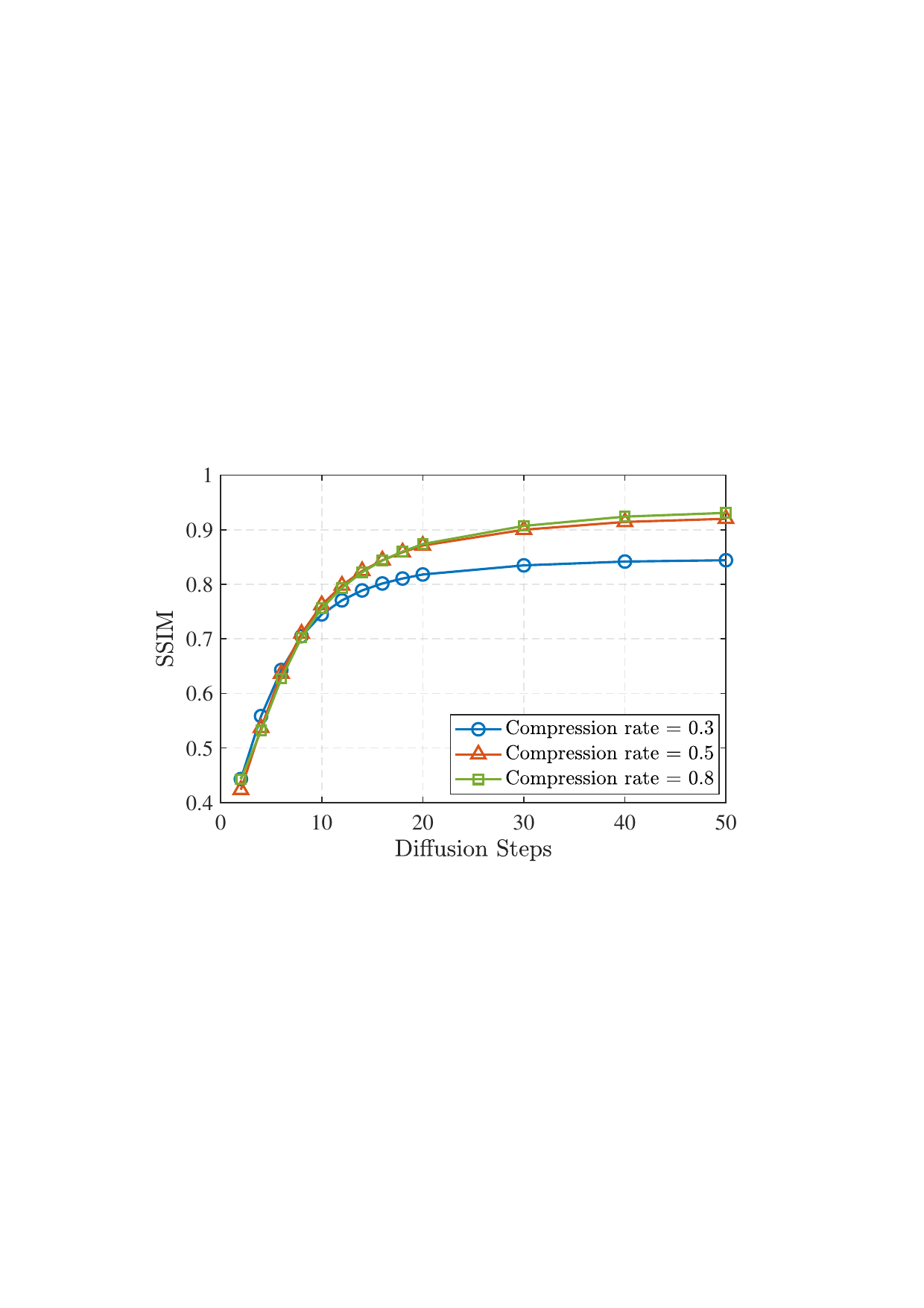}
            \caption{SSIM corresponds to different diffusion steps under different compression rates.}\label{SSIM_step}
	\end{minipage}
	\hspace{0.3in}
	\begin{minipage}[t]{0.4\linewidth}
		\centering
            \includegraphics[width=1\linewidth]{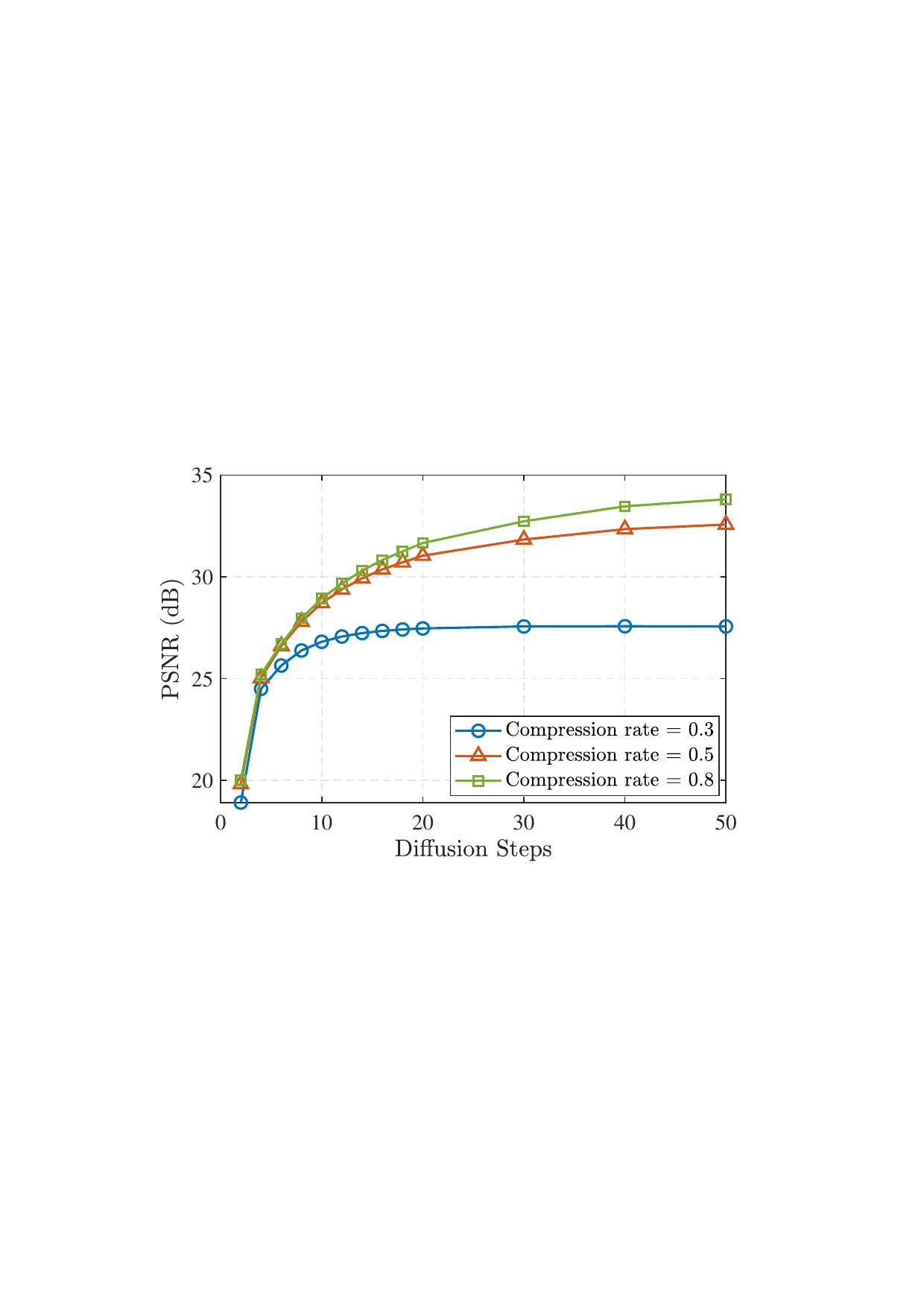}
		\captionsetup{font=footnotesize}
            \caption{PSNR corresponding to different diffusion steps under different compression rates.}\label{PSNR_step}
	\end{minipage}
	\end{center}
\end{figure*}

\begin{figure*}[t]
\centering
\includegraphics[width=0.9\textwidth]{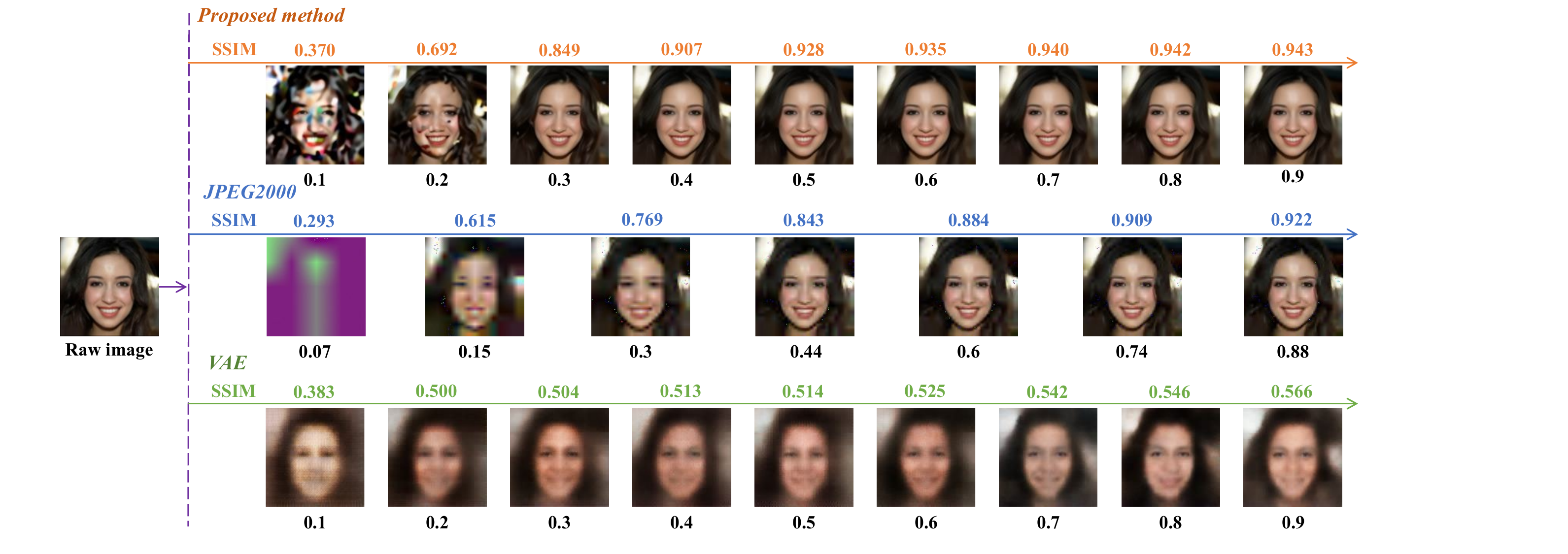}
\captionsetup{font=footnotesize}
\caption{Raw image and images recovered by different methods under different compression rates in a channel with SNR = 5 $dB$.}
\label{example}
\end{figure*}

\begin{table}[t]
\label{parameter}
\renewcommand{\arraystretch}{1}
\captionsetup{font = small}
\centering
\caption{Summary of Training Hyperparameters \cite{du2023beyond}}
\begin{tabular}{>{\centering\arraybackslash}p{0.8cm}>{\raggedright\arraybackslash}p{4.7cm}>{\centering\arraybackslash}p{1.5cm}} \toprule[1.2pt]
 \textbf{Symbols} & \textbf{Parameters} & \textbf{Values} \\ \midrule
$\lambda$&Weight decay &$1\times10^{-4}$\\
$B$&  Batch size&$512$\\
$N$&  Denoising steps&$5$\\
$D$&  Maximum capacity of the replay buffer&$1\times10^{6}$\\
$\eta_{\mathrm{a}}$&  Learning rate of the generation networks&$1\times10^{-4}$\\
$\eta_{\mathrm{c}}$&  Learning rate of the evaluation networks&$1\times10^{-3}$\\
$\tau$& Weight of soft update&$0.005$\\
$\omega$& Discount factor &$0.95$\\\bottomrule[1.2pt]
\end{tabular}\label{parameter}
\end{table}

\begin{figure}
    \centering
    \includegraphics[width=0.4\textwidth]{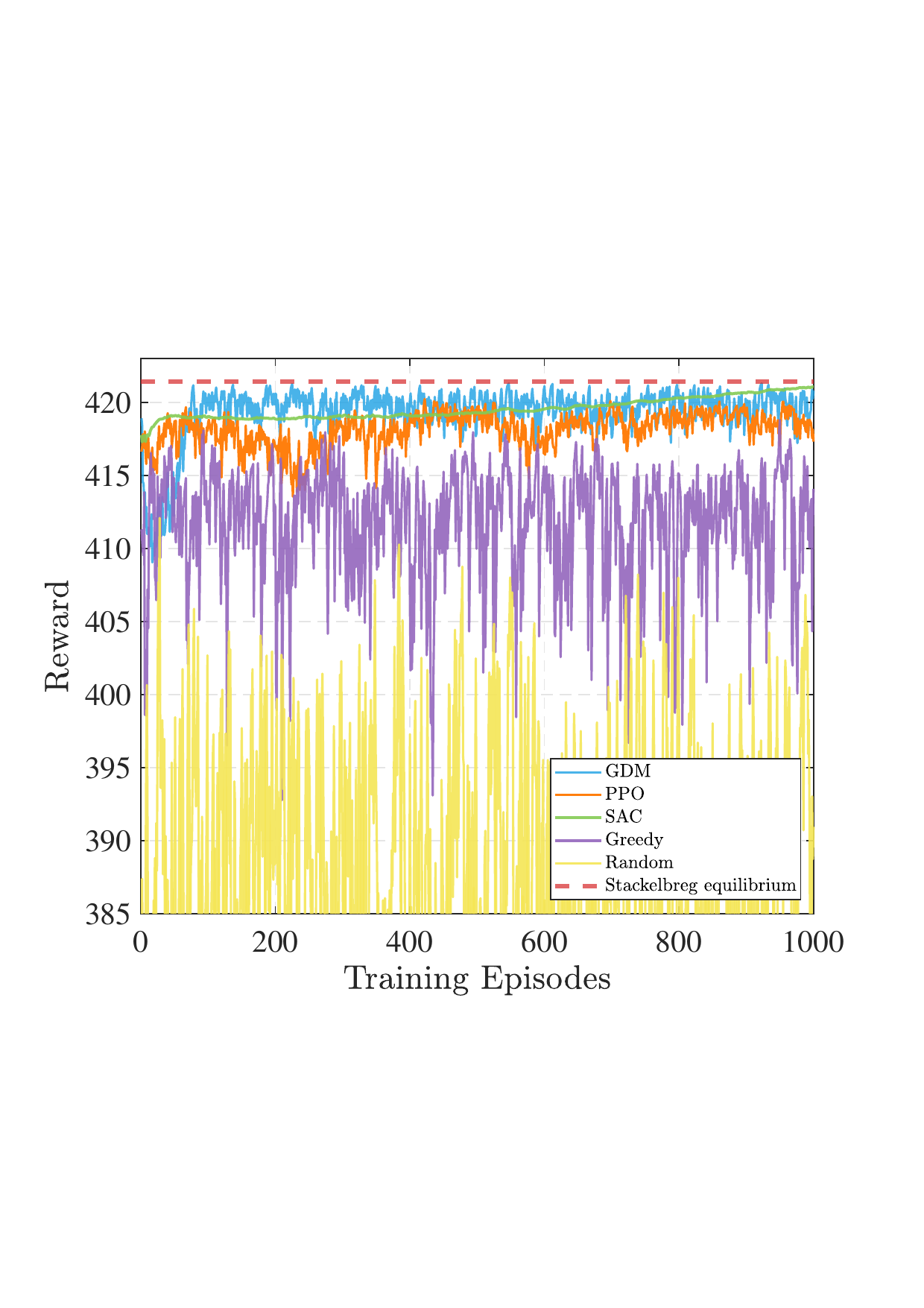}
    \captionsetup{font=footnotesize}
    \caption{The reward comparison of proposed GDM-based DRL algorithm with other algorithms, i.e., PPO, SAC, Greedy, and Random.}
    \label{compare_average_random}
\end{figure}

\begin{figure*}
    \centering
    \begin{minipage}{0.4\linewidth}
        \centering
        \includegraphics[width=1\textwidth]{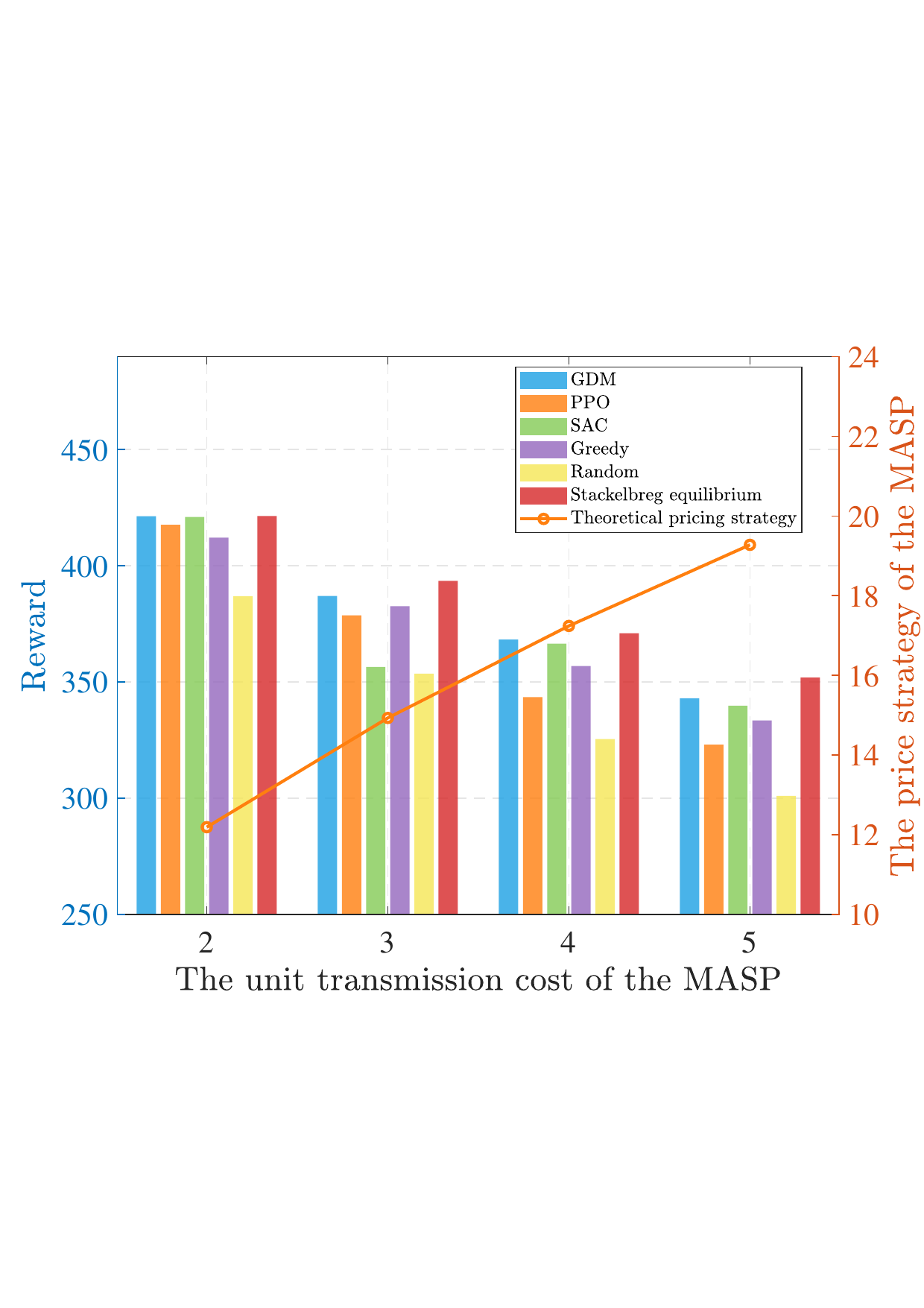}
        \captionsetup{font=footnotesize}
        \caption{The performance of the proposed GDM-based DRL algorithm is evaluated as the unit transmission cost varies within $[2, 5]$.}
        \label{transmission_cost}
    \end{minipage}
    \hspace{0.7in}
    \begin{minipage}{0.4\linewidth}
        \centering
        \includegraphics[width=1\textwidth]{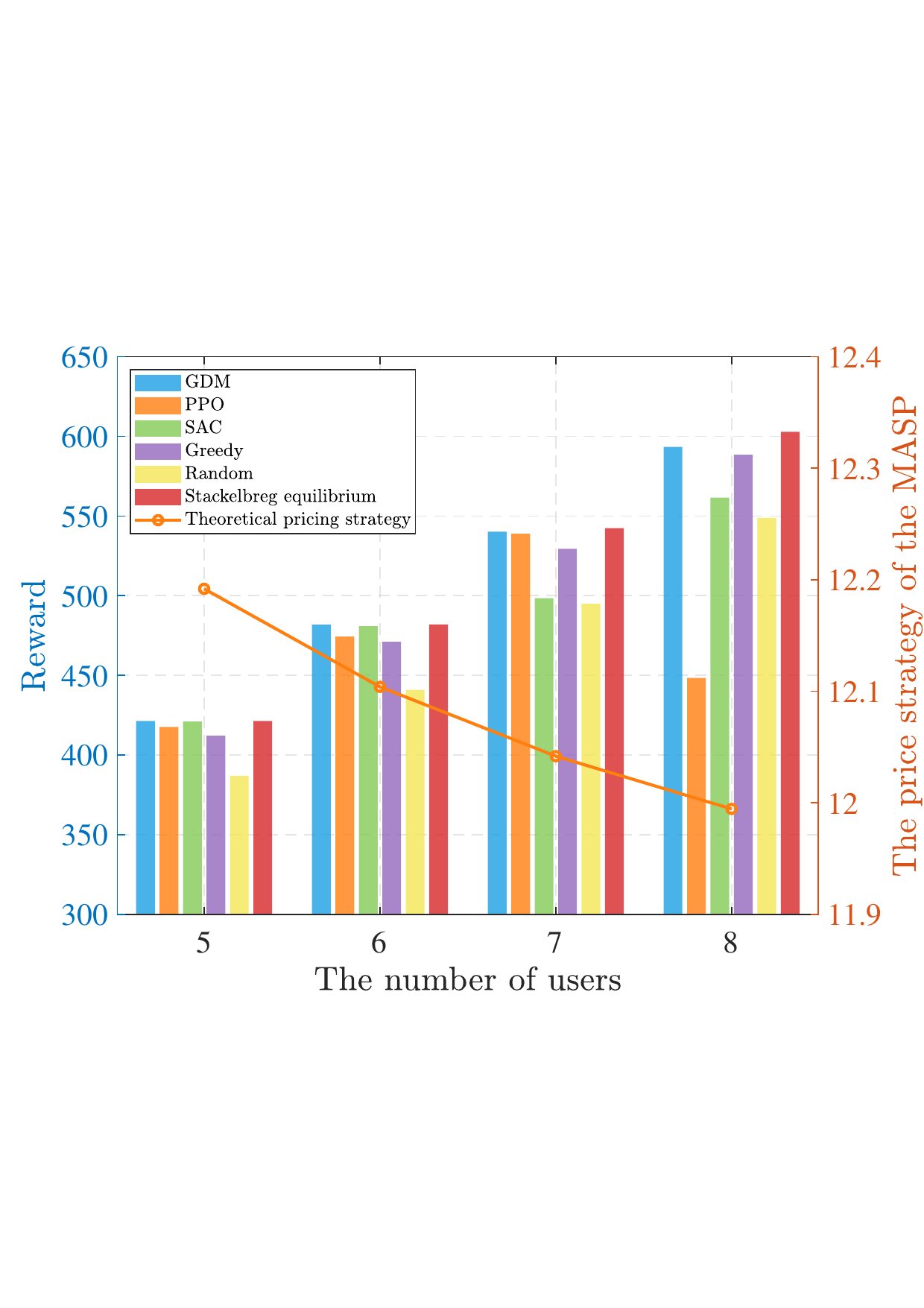}
        \captionsetup{font=footnotesize}
        \caption{The performance of the proposed GDM-based DRL algorithm is evaluated as the number of users varies within $[5, 8]$. }
        \label{number_users}
    \end{minipage}
\end{figure*}

\begin{figure}
    \centering
    \includegraphics[width=0.4\textwidth]{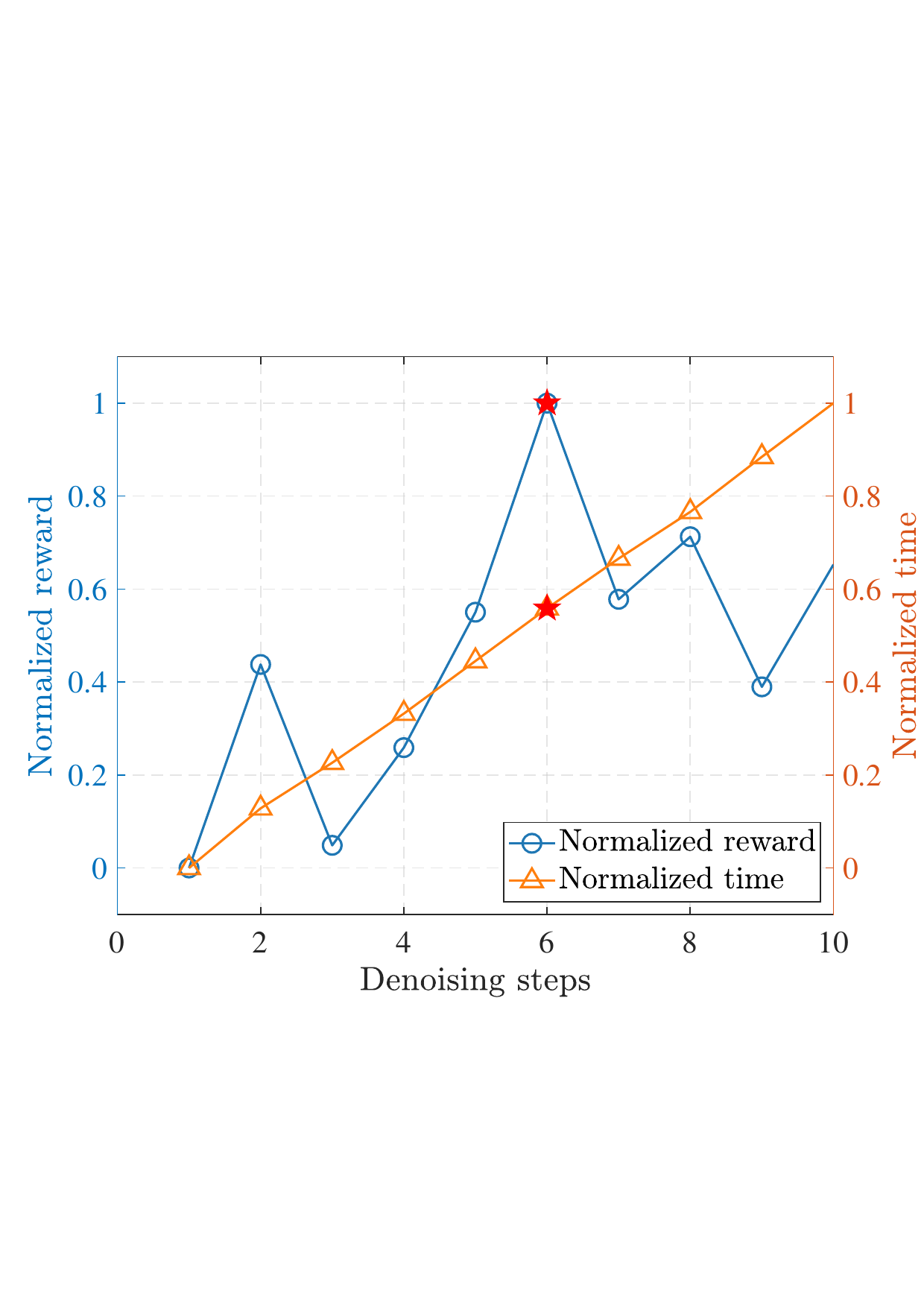}
    \captionsetup{font=footnotesize}
    \caption{The values of normalized reward and normalized time correspond to different denoising steps.}
    \label{denoising_value}
\end{figure}

\section{Numerical Results}\label{results}



\subsection{Simulation Settings}
We verify the effectiveness of the proposed GM-SemCom in mobile AIGC networks with OFDMA channels~\eqref{OFDMA} with different quality channel conditions. In the experiment, we consider two benchmark schemes. JPEG2000 Scheme: utilizes traditional Separate Source-Channel Coding (SSCC), employing JPEG2000 for image compression \cite{christopoulos2000jpeg2000}. VAE Scheme: VAE-based SemCom combined with Kullback-Leibler divergence loss\cite{kingma2013auto}. The training dataset is the FFHQ dataset with image sizes of $128 \times 128$ \cite{karras2019style}, and the test dataset is a subset of the CELEBA dataset of the same size \cite{Liu2015celebA}. Our experiments are a re-implementation of DDIM, based on the improved architecture by \cite{du2024generative,dhariwal2021diffusion}.

The parameter settings of the GDM-based DRL algorithm for solving the Stackelberg game are shown in Table~\ref{parameter}. We consider a scenario where there are $M=5$ users. The unit transmission cost $c$ is $2$, the AIGC service cost $\mathcal{P}$ is $10$, the maximum bandwidth is $200~\text{MHz}$, the maximum bandwidth price is $20$, and the immersion coefficient $\delta_i$ is $15$. The transmitter power of the MASP is $40~\text{dBm}$, the unit channel power gain is $-20~\text{dB}$, the path loss exponent is $2$, and the average noise power $N_0$ is $-150~\text{dBm}$ \cite{zhang2023learning}. Depending on the dataset, the bit volume of the transmitted AIGC product is $\mathcal{L}_s = 10~\text{MB}$. The distance between MASP and the five users is $[100, 200, 300, 400, 500]$ $m$, the compression rate is $[0.3, 0.4, 0.5, 0.6, 0.7]$, and the SSIM of the AIGC product received by the users is $[0.75, 0.8, 0.85, 0.9, 0.95]$. We set $E = 1000$ and $K = 10$ during the experiment.

\subsection{Effectiveness of GM-SemCom } \label{effetiveness of GM}
The degree of visual distortion between the source image and the reconstructed image is measured using SSIM~\eqref{SSIM} and PSNR \cite{hore2010image}. The PSNR is defined as
\begin{equation}PSNR(x,y)=10\cdot\log_{10}\bigg(\frac{MAX^{2}}{MSE}\bigg),\end{equation}
where $MSE$ represent the mean squared error between the source image $x$ and the reconstructed image $y$, and the $MAX$ represents the maximum absolute error of image pixels \cite{hore2010image}. The $MSE$ is defined as
\begin{equation}MSE=\frac{1}{H\times W}\sum_{i=1}^{H}\sum_{i=1}^{W}\left(x(i,j)-y(i,j)\right)^{2},\end{equation}
where \( x(i,j) \) and \( y(i,j) \) denote the pixel values at the corresponding coordinates, \( H \) and \( W \) represent the height and width of the images, respectively. A smaller $MSE$ indicates a closer resemblance between two images, signifying reduced distortion.

Fig. \ref{snr2} and \ref{snr5} present a comparative analysis of the proposed GM-SemCom against established baseline schemes, evaluating SSIM and PSNR as functions of the compression rate under various channel qualities. The results indicate that our proposed method achieves superior SSIM and PSNR values at compression rates above approximately $0.3$ and maintains commendable performance even under poor channel conditions. In contrast, the VAE-based method exhibits the lowest performance at lower compression rate settings. This is due to the fact that the VAE-based model cannot effectively reconstruct source information when feature compression is too high. Additionally, the dynamic inference characteristics of the single-modal GAI model may impede accurate information recovery. These findings underscore the capability of the proposed GM-SemCom to enhance the accuracy of information transmission while minimizing resource consumption.

The performance of the decoder in the GM-SemCom is further evaluated under the different compression rate and diffusion steps, as shown in Fig. \ref{SSIM_step} and Fig. \ref{PSNR_step}. The experimental results show that increasing the compression rate and diffusion steps can reduce the quality of image reconstruction. Once reconstruction quality is high, further increasing the compression rate and diffusion steps yields minimal improvement. Therefore, it is important to find a balance between resource consumption and service quality to optimally meet user needs. These findings highlight the need for precise adjustment of GM-SemCom parameters to ensure an effective combination of resource consumption and service quality. As shown in Fig. \ref{example}, we present an example of using the GM-SemCom for data transfer at different compression rates in the test dataset and compare it with the baseline method. Numerical results demonstrate that the accuracy and clarity of information transmitted by GM-SemCom are improved under the same compression rate. This confirms the robustness of the proposed GM-SemCom in energy-efficient information transmission and its effectiveness in integrated data communication.

\subsection{Effectiveness of GDM-based Solution for Stackelberg Game }

Firstly, we demonstrate the performance of our proposed GDM-based DRL algorithm. We compare the proposed GDM-based DRL algorithm ($N=5$) with two traditional algorithms (i.e., random and greedy) and two standard DRL algorithms (i.e., SAC and PPO). As shown in Fig. \ref{compare_average_random}, as the number of episodes increases, we find that only the proposed GDM-based DRL algorithm and the SAC algorithm can reach convergence, while the GDM-based DRL algorithm converges faster. In addition, compared with the random algorithm, greedy algorithm, SAC algorithm, and PPO algorithm, the final result of the GDM-based DRL algorithm can almost reach the Stackelberg equilibrium even without complete information, allowing the MASP to reach the theoretical optimal payoff. The GDM-based DRL algorithm utilizes dynamic inference characteristics of the diffusion model and has strong exploration capabilities. The proposed algorithm of the multi-step generation strategy is consistent with the Stackelberg game under incomplete information. Note that the Stackelberg equilibrium is obtained by \eqref{Stackelberg equilibrium} proposed in Section \ref{Stackelberg Game between MAPS and users}.

In Fig. \ref{transmission_cost}, we study the impact of transmission cost on the MASP. Specifically, we investigate the impact of unit transmission cost by gradually varying the cost from $2$ to $5$. It can be seen that the proposed GDM-based DRL algorithm makes the utility of the MASP close to the Stackelberg equilibrium, and its utility is significantly better than the baseline. This result shows that the GDM-based DRL algorithm can find the optimal solution under incomplete information. As the cost increases, the revenue of the MASP decreases. This is because the increase in the cost also increases the pricing of the MASP, thereby reducing the user's demand for purchasing bandwidth, resulting in a decrease in the utility of the MASP.

Next, we study the impact of the MASP serving different numbers of users. We set the unit transmission cost to $5$ and gradually increased the number of users from $5$ to $8$. The parameters for the new users are uniformly set to a distance of $500$ $\rm{m}$, compression rate of $0.5$, and SSIM of $0.8$. As shown in Fig. \ref{number_users}, as the number of users increases, the utility of the MASP also increases. The reason is that when there are fewer users, the MASP has sufficient bandwidth to serve all users. The MASP reduces pricing to encourage users to purchase more bandwidth, which increases utility. However, when the number of users is larger, the MASP’s bandwidth may become insufficient. Consequently, the MASP increases the bandwidth price to restrict excessive purchases and maintain service quality.

In the GDM-based DRL algorithm, increasing diffusion steps $N$ enhances exploration and rewards beyond local optima, benefiting the MASP. However, more steps are not always better as they increase time consumption \cite{du2024diffusion}. In our example, resource pricing strategies with over six steps struggle to converge, and training becomes inefficient. As shown in Fig. \ref{denoising_value}, we vary the denoising step from $N=1$ to $N=10$. We observe that as the number of denoising steps increases, the reward first increases and then decreases, while the training time continues to increase. This demonstrates that the reward curve has an optimal number of denoising steps at the inflection point. This is because too many diffusion steps will weaken the ability to effectively explore the reward, while excessive denoising would lead to overfitting. Thus, balancing learning performance and computational efficiency to choose the appropriate denoising step is key to achieving optimal rewards.

\section{Conclusion}\label{conclusion}
In this paper, we have proposed a GM-SemCom framework for mobile AIGC networks to achieve high-quality and low-latency AIGC services. Specifically, we have integrated diffusion models with multi-modal semantic information and designed a controllable compression module to achieve controllable semantic information compression. Numerical results have clearly demonstrated that the performance of GM-SemCom is better than those of the  JPEG2000 and VAE-based SemCom. To enhance the QoE of AIGC services, we have introduced a new metric called AoSI. Additionally, we have proposed a Stackelberg game model that incorporates AoSI and psychological factors into the mechanism design to address the resource transaction problem within the framework. Furthermore, we have designed a GDM-based DRL algorithm to solve the Stackelberg game under conditions of incomplete information. Compared with traditional DRL, our algorithm achieves an optimal scheme that converges faster and is closer to the Stackelberg equilibrium. For future work, it is crucial to enhance security measures against known vulnerabilities to ensure the reliability and safety of AIGC services \cite{kang2023adversarial}. As AIGC services expand, prioritizing robust privacy mechanisms is also essential to mitigate data breach risks.

\bibliographystyle{IEEEtran}

\bibliography{ref}
\end{document}